\newif\ifcameraready
\let\doendproof\endproof
\renewcommand\endproof{~\hfill\qed\doendproof}
\title{Efficient Simulation for \\Quantum Message Authentication}
\author{Anne Broadbent \and Evelyn Wainewright}
\institute{Department of Mathematics and Statistics, University of Ottawa, Canada,\\
\email{\{abroadbe,ewain031\}@uottawa.ca}
}
\newtheorem{defn}{Definition}
\newcommand{\C}{\mathcal{C}}
\newcommand{\D}{\mathcal{D}}
\newcommand{\E}{\mathcal{E}}
\newcommand{\K}{\mathcal{K}}
\newcommand{\revise}[1]{}
\newcommand{\reg}[1]{\ensuremath{#1}}%{\mathcal{#1}}
\newcommand{\regS}{\reg{S}}
\newcommand{\regC}{\reg{C}}
\newcommand{\regM}{\reg{M}}
\newcommand{\regF}{\reg{F}}
\newcommand{\regR}{\reg{R}}
\newcommand{\trnorm}[1]{\left\|#1\right\|_1}
\newcommand{\abs}[1]{\left|#1\right|}
\newcommand{\ketbra}[2]{\ket{#1}\bra{#2}}			% Ket-Bra (|.><.|)
\newcommand{\egoketbra}[1]{\ketbra{#1}{#1}}			% Ket-Bra with the same argument
\newcommand{\accstate}{\egoketbra{\text{acc}}}
\newcommand{\rejstate}{\egoketbra{\text{rej}}}
\newcommand{\accstatepure}{\ket{\text{acc}}}
\newcommand{\rejstatepure}{\ket{\text{rej}}}
\begin{document}
\maketitle

\begin{abstract}
Quantum message authentication codes are families of keyed encoding and decoding maps that enable the detection of
tampering on
encoded quantum data.
Here, we study a new class of simulators for quantum message authentication schemes,  and show how they are applied in the
 context of two codes: the \emph{Clifford} and the \emph{trap} code. Our results show for the first time that these codes admit an \emph{efficient simulation} (assuming that the adversary is efficient). Such efficient simulation is typically crucial in order to establish a composable notion of security.
\end{abstract}

\section{Introduction}
\label{sec:intro}
Quantum cryptography is the study of the security of information processing in a quantum world. While quantum key distribution~\cite{BB84} is today the most widely successful quantum cryptographic technology~\cite{BEM+07,Feh10}, quantum information effectively re-defines many cryptographic paradigms~\cite{BS16}. Among these is the need for new definitions and protocols for cryptographic tasks that operate on quantum data, such as quantum secret sharing~\cite{CGL99} and quantum multi-party computation~\cite{BCG+06}. Another fundamental task is quantum message authentication.

Quantum message authentication schemes, introduced in~\cite{BCG+02}, are families of keyed encoding and decoding maps which allow for the detection of tampering on encoded quantum data. These codes were originally given in a very efficient form, based on \emph{purity testing}~\cite{BCG+02}, and were shown to also satisfy a composable security notion~\cite{HLM11}.

Further quantum message authentication schemes have been proposed, including the \emph{signed polynomial code}\cite{BCG+06,ABE10}, the trap code~\cite{BGS13} and the \emph{Clifford} code~\cite{ABE10,DNS12}. These schemes have a nice algebraic form, which makes them particularly easy to study. Perhaps the main reason for interest in these schemes is that they have a sufficient amount of ``structure'' to enable evaluation of quantum gates over the encoded data (this technique is called \emph{quantum computing on authenticated data (QCAD)}). This has lead to protocols for multi-party quantum computation\cite{BCG+06}, quantum one-time programs \cite{BGS13} and the verification of quantum computations~\cite{ABE10}.

The security of quantum message authentication schemes is typically defined in terms of the existence of a  \emph{simulator} that, given access only to the ideal functionality for quantum message authentication (which is a virtual device that either transmits the quantum data directly and outputs ``accept'', or replaces it with a dummy state and outputs ``reject''), is able to emulate the behaviour of the adversary so that the real-world protocol (involving the adversary) is statistically indistinguishable from the ideal-world protocol (involving the simulator). This type of definition fits in the quantum Universal Composability (UC)\cite{Can01,Unr10}  framework, as long as we add a further condition: if the adversary runs in polynomial time, so must the simulator (an \emph{efficient} simulation). Until now, direct efficient simulations were known only for the purity-testing based codes~\cite{BCG+02}.

In this work, we show a new family of efficient simulators for quantum message authentication schemes. The main idea is that the simulator replaces the entire codeword by half-EPR pairs (keeping the remaining half to itself), and runs the adversary on these entangled states (as well as the reference system for the original input). After the attack is applied, the simulator performs Bell basis measurements in order to verify the integrity of the EPR pairs. So long as enough EPR pairs are found to be intact, the simulator makes the ideal functionality ``accept''; otherwise, it makes it ``reject''.  It is well-known that this Bell basis measurement will detect any non-identity Pauli attack---given the structure of the codes that we analyze, we show that this is sufficient.

We apply this type of simulator to the Clifford and trap quantum message authentication codes.  We note that the Clifford code was previously proven secure according to an algebraic definition, without an efficient simulation~\cite{ABE10,DNS12}, and that the trap scheme was proven secure according to a simulator for a more elaborate ideal functionality for \emph{quantum one-time programs}~\cite{BGS13}. We thus establish for the first time efficient simulators for these codes (note, however, that we make extensive use of the algebraic tools developed in these prior works, and that we achieve the same security bounds).
We also note that the idea of using EPR-pair testing as a proof technique for quantum message authentication has appeared in~\cite{BCG+02}, where a more elaborate type of testing (called \emph{purity testing}) is used.

\paragraph{\textbf{Roadmap.}} The remainder of the paper is structured as follows. In Section~\ref{sec:prelims}, we give some details on the standard notation and well-known facts that are used throughout. In Section~\ref{sec:defns}, we formally define quantum message authentication in terms of correctness and security. Section~\ref{sec:schemes} gives the Clifford and trap schemes, while in Section~\ref{sec:security} we show security of the schemes.

\section{Preliminaries}
\label{sec:prelims}
Here, we present  basic notation (\cref{sec:basic}) and well-known facts about the Pauli (\cref{sec:Paulis})  and Clifford (\cref{sec:Cliffords})~groups.

\subsection{Basic Notation}
\label{sec:basic}
We assume the reader is familiar with the basics of quantum information~\cite{NC00}, but nevertheless give a quick review of the most relevant notation in this section. We will use the density operator formalism to represent quantum states. Density matrices are represented with a greek letter, typically~$\rho$. The subscripts of the quantum states indicate which spaces (registers) the states reside in. We therefore represent the density operator for the state in the $M$ register as~$\rho_M$.

The \emph{trace norm} of a state, $\rho$, denoted $\trnorm{\rho}$, is defined as $\trnorm{\rho}=tr[\sqrt{\rho^{\dagger}\rho}]$. The \emph{trace distance} between two states $\rho$ and $\sigma$, denoted $D(\rho, \sigma)$, is defined as $D(\rho,\sigma)=\frac{1}{2}\trnorm{\rho-\sigma}$. The trace distance is a measure of distiguishability between the two states $\rho$ and $\sigma$. The trace distance is equal to $0$ if and only if $\rho$ and $\sigma$ are the same state (and therefore indistinguishable) and the trace distance is equal to $1$ if and only if $\rho$ and $\sigma$ are orthogonal (and therefore perfectly distinguishable). The trace norm, and therefore the trace distance, satisfies the triangle inequality: $\trnorm{\rho+\sigma} \leq \trnorm{\rho} + \trnorm{\sigma}$.

Let $\mathcal{B}(\mathcal{H})$ be the space of bounded linear operators acting on a Hilbert space, $\mathcal{H}$. Given $\mathcal{A} \subseteq \mathcal{B}(\mathcal{H}_1)$ and $\mathcal{B} \subseteq \mathcal{B}(\mathcal{H}_2)$ then given a linear map $T$ from $\mathcal{A} \rightarrow \mathcal{B}$, $T$ is called \emph{positive} if $T(A) \geq 0$ for all positive $A \in \mathcal{A}$.  $T$ is a \emph{completely positive} map, (\emph{CP} map), if $T \otimes Id: \mathcal{A} \otimes \mathcal{B} \rightarrow \mathcal{B}(\mathcal{H}_1) \otimes \mathcal{B}(\mathds{C}^n)$ is positive for all $n \in \mathds{N}$. In this case, $Id$ is the identity map on $\mathcal{B}(\mathds{C}^n)$ and $\mathds{C}^n$ is isomorphic to a complex Hilbert space of dimension $n$. A map, $T$, is \emph{trace preserving} if  $tr(T(\rho))=tr(\rho)$. $T$ is a \emph{quantum channel} if it is a \emph{completely positive} and \emph {trace preserving} map (\emph{CPTP} map). A family of quantum maps is \emph{polynomial-time} if they can be written as a polynomial-time uniform family of quantum circuits. A quantum state is \emph{polynomial-time generated} if it given as the output of a polynomial-time quantum map (which takes as input the all-zeros state)~\cite{Wat11}.

A permutation map, denoted throughout by $\pi$, is a unitary operation that acts on $n$ qubits and permutes the order of the $n$ qubits. This can equivalently be seen as a permutation, $\sigma$, of the indices of the qubits, where~$\pi$ would take the $i^{th}$ qubit to the $\sigma(i)^{th}$ position. Permutation maps are orthogonal, real valued matrices so $\pi^{-1}=\pi^{\dagger}$. We use $\Pi_{n}$ to denote the set of all permutation maps on $n$ qubits.

We denote a two-qubit maximally entangled pure state as $\ket{\Phi^{+}}=\frac{1}{\sqrt{2}}(\ket{00}+\ket{11})$. This is one of four Bell states. The other three Bell states are also maximally entangled pure states, $\ket{\Phi^{-}}=\frac{1}{\sqrt{2}}(\ket{00}-\ket{11})$, $\ket{\Psi^{+}}=\frac{1}{\sqrt{2}}(\ket{01}+\ket{10})$, and $\ket{\Psi^{-}}=\frac{1}{\sqrt{2}}(\ket{01}-\ket{10})$. The four Bell states are orthogonal and form a basis for two-qubit states. The four Bell states are therefore perfectly distinguishable and so we can perform a projective measurement into the Bell basis and determine which of the four Bell states we have. This is called a \emph{Bell basis measurement}.

An $[[n,1,d]]$-code is a quantum error correcting code that encodes one logical qubit into $n$ qubits and has distance $d$; if $d=2t+1$, the code can correct up to $t$  bit or phase flips. We assume that the decoding map can always be applied, but if more than $t$ errors are present, it is not guaranteed to decode to the original input.

\subsection{Pauli Matrices}
\label{sec:Paulis}

The single-qubit \emph{Pauli matrices} are given by:
\begin{equation}
  I=\begin{bmatrix} 1 & 0\\ 0 &1 \end{bmatrix}, X=\begin{bmatrix} 0 & 1 \\ 1 & 0 \end{bmatrix}, Z=\begin{bmatrix} 1 & 0\\ 0 & -1 \end{bmatrix}, \text{ and } Y=iXZ=\begin{bmatrix} 0 & -i \\ i & 0 \end{bmatrix}.
  \end{equation}
Recall that if we allow complex coefficients, the any single-qubit gate can be written as a linear combination of the four single-qubit Pauli matrices.

An $n$-qubit Pauli matrix is given by the $n$-fold tensor product of single-qubit Paulis. We denote the set of all $n$-qubit Pauli matrices by $\mathds{P}_n$, where $\abs{\mathds{P}_n}=4^n$. Any $n$-qubit unitary operator, $U$, can also be written as a linear combination of $n$-qubit Paulis, again allowing for complex coefficients. This gives $U= \sum_{P \in \mathds{P}_n} \alpha_P P$, with $\sum_{P \in \mathds{P}_n}|\alpha_P|^2 =1$, since $U$ is unitary. This is called the \emph{Pauli decomposition} of a unitary quantum operation.

The \emph{Pauli weight} of an $n$-qubit Pauli, denoted $\omega(P)$, is the number of non-identity Paulis in the $n$-fold tensor product. We will also define sets of Paulis composed only of specific Pauli matrices, such as $\{I,X\}^{\otimes n}$ which is the set of all $n$-qubit Paulis composed of only $I$ and $X$ Paulis, or  $\{I,Z\}^{\otimes n}$ which is the set of all $n$-qubit Paulis composed of only $I$ and $Z$ Paulis. Finally, Paulis are self-inverses, so $P=P^{-1}=P^{\dagger}$.

The following lemma, called the \emph{Pauli Twirl}~\cite{DCEL09}, shows how we can greatly simplify expressions that involve the twirling of an operation by the Pauli matrices:
\begin{lemma}[Pauli Twirl]\label{lem:Pauli-twirl}
Let $P, P'$ be Pauli operators. Then for any $\rho$ it holds that:
\begin{equation*}
\frac{1}{\abs{\mathds{P}_{n}}}\sum_{Q \in \mathds{P}_n} Q^{\dagger} P Q \rho Q^{\dagger} P'^{\dagger} Q = \begin{cases} 0, \text{ }P \neq P' \\ P\rho P^{\dagger} , \text{otherwise}\,.\end{cases}
\end{equation*}
\end{lemma}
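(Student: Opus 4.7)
The plan is to exploit the fact that any two Pauli operators either commute or anticommute, so conjugation of a Pauli by another Pauli produces only a sign. For any $P, Q \in \mathds{P}_n$, define $\epsilon_P(Q) \in \{+1,-1\}$ by $Q^{\dagger} P Q = \epsilon_P(Q)\, P$. Using that each Pauli is Hermitian and self-inverse, so in particular $P'^{\dagger} = P'$, each summand of the left-hand side becomes $\epsilon_P(Q)\,\epsilon_{P'}(Q)\cdot P\rho P'^{\dagger}$. The outer factor $P\rho P'^{\dagger}$ is independent of $Q$, so the entire expression collapses to $\bigl(\tfrac{1}{\abs{\mathds{P}_n}}\sum_{Q\in\mathds{P}_n}\epsilon_P(Q)\,\epsilon_{P'}(Q)\bigr)\cdot P\rho P'^{\dagger}$, reducing the lemma to the evaluation of a scalar.

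Next I would observe that $\epsilon_{(\cdot)}(Q)$ is multiplicative: since $Q^{\dagger}(PP')Q = (Q^{\dagger}PQ)(Q^{\dagger}P'Q)$, we have $\epsilon_P(Q)\,\epsilon_{P'}(Q) = \epsilon_{PP'}(Q)$, where $PP'$ is interpreted up to an overall phase that cancels in the sign extraction. The coefficient thus depends only on the product $PP'$. If $P = P'$, then $PP' = I$ commutes with every $Q$, so every summand equals $+1$ and the coefficient is $1$, giving $P\rho P^{\dagger}$.

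The remaining case $P \neq P'$ reduces to the key combinatorial claim: for any non-identity Pauli $R \in \mathds{P}_n$, exactly half of the $4^n$ elements of $\mathds{P}_n$ commute with $R$, and the other half anticommute. For a single qubit, direct inspection shows any non-identity Pauli commutes with exactly two of $\{I,X,Y,Z\}$ (itself and $I$) and anticommutes with the other two. For the $n$-qubit tensor case, $Q$ commutes with $R$ iff the number of positions where $Q_i$ anticommutes with $R_i$ is even; fixing any index $i_0$ with $R_{i_0} \neq I$, the map that flips only the $i_0$-th tensor factor between the two Paulis with opposite commutation relation to $R_{i_0}$ is a sign-reversing involution on $\mathds{P}_n$, forcing $\sum_{Q} \epsilon_R(Q) = 0$.

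The main obstacle is chiefly notational: carefully tracking the global phases that appear in products like $XZ = -iY$ to confirm that $\epsilon_{PP'}(Q)$ is well-defined and genuinely equals $\epsilon_P(Q)\,\epsilon_{P'}(Q)$ even when $PP'$ differs from an element of $\mathds{P}_n$ by a phase, and remembering that the phases drop out of the sign itself. Once this bookkeeping is in place, the proof is just the collapse step, the homomorphism identity, and the counting argument above.
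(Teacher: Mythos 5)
Your proof is correct: collapsing each summand to $\epsilon_P(Q)\,\epsilon_{P'}(Q)\cdot P\rho P'^{\dagger}$, using multiplicativity of the conjugation sign to reduce to a single non-identity Pauli $R$ (with the phase in $PP'$ harmlessly discarded), and then killing $\sum_Q \epsilon_R(Q)$ via the sign-reversing involution on one tensor factor is the standard and complete argument. Note that the paper does not actually prove \cref{lem:Pauli-twirl} --- it only cites~\cite{DCEL09} and proves the Clifford twirl from it --- so there is no in-paper proof to compare against; your argument matches the standard one.
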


\subsection{Clifford Group}
\label{sec:Cliffords}

The \emph{Clifford group}, $\mathcal{C}_n$, on $n$ qubits are unitaries that map Pauli matrices to Pauli matrices (up to a phase of $\pm1$ or $\pm i$). Specifically, if $P \in \mathds{P}_n$, then for all $C \in \mathcal{C}_n$, $\alpha CPC^{\dagger} \in \mathds{P}_n$, for some $\alpha \in \{ \pm 1, \pm i \}$.
Not only do Cliffords map Paulis to Paulis, but they do so with a uniform distribution~\cite{ABE10}:

\begin{lemma}[Clifford Randomization]\label{lem:Cliff-decomp}
Let $P$ be a non-identity Pauli operator. Applying a random Clifford operator (by conjugation) maps it to a Pauli operator chosen uniformly over all non-identity Pauli operators. More formally, for every $P$, $Q$ $\in \mathds{P}_n \setminus \{ \mathds{I} \}$, it holds that:
\begin{equation*}
\abs{\{ C \in \mathcal{C}_n | C^{\dagger}PC=Q\}}=\frac{\abs{\mathcal{C}_n}}{\abs{\mathds{P}_n}-1}.
\end{equation*}
\end{lemma}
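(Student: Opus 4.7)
The plan is to prove this via a transitivity-plus-orbit-stabilizer argument: I want to show that the conjugation action $C \mapsto C^\dagger P C$ is transitive on non-identity Paulis, and then invoke the fact that for a transitive group action, all fibers have the same size, forcing each fiber to have size $|\mathcal{C}_n|/(|\mathds{P}_n|-1)$.

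First, I would fix a non-identity Pauli $P$ and define the map $\phi_P \colon \mathcal{C}_n \to \mathds{P}_n$ by $\phi_P(C) = C^\dagger P C$ (interpreting equality up to the $\{\pm 1, \pm i\}$ phase guaranteed by the definition of the Clifford group — this phase ambiguity is the main technical nuisance, and I discuss it below). The key structural claim is \emph{transitivity}: for any non-identity $Q \in \mathds{P}_n \setminus \{\mathds{I}\}$, there exists $C \in \mathcal{C}_n$ with $C^\dagger P C = Q$. I would prove this by showing every non-identity $n$-qubit Pauli is Clifford-conjugate to the canonical Pauli $X \otimes I^{\otimes(n-1)}$. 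The reduction is explicit: on each tensor factor, apply single-qubit Cliffords ($H$, $S$, $HSH$) to turn every non-identity Pauli factor into $X$ (and leave $I$'s alone); then apply CNOT gates to ``sweep'' the $X$'s into the first qubit, using the identity $\opn{CNOT}(X \otimes X)\opn{CNOT} = X \otimes I$. Composing the reduction of $P$ with the reverse reduction of $Q$ produces the desired~$C$.

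Next, for the counting, I would invoke the orbit-stabilizer theorem. Let $\opn{Stab}(P) = \{C \in \mathcal{C}_n : C^\dagger P C = P\}$ be the stabilizer subgroup. If $C_0$ satisfies $\phi_P(C_0) = Q$, then $\phi_P^{-1}(Q) = C_0 \cdot \opn{Stab}(P)$ is a left coset of $\opn{Stab}(P)$, so every non-empty fiber has size $|\opn{Stab}(P)|$. Combined with transitivity (every $Q \in \mathds{P}_n \setminus \{\mathds{I}\}$ has non-empty fiber), the set $\mathcal{C}_n$ partitions into $|\mathds{P}_n| - 1$ fibers of equal size, giving $|\phi_P^{-1}(Q)| = |\mathcal{C}_n| / (|\mathds{P}_n| - 1)$ as claimed.

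The hard part — or more precisely, the main subtlety — will be handling the global phase $\alpha \in \{\pm 1, \pm i\}$ in $C^\dagger P C = \alpha Q$, since $\mathds{P}_n$ as defined has cardinality $4^n$ and therefore contains one representative per phase class. The cleanest resolution is to either (i) work with representatives in the projective Pauli group throughout, so the $\alpha$ is factored out, or (ii) verify that phases distribute uniformly across the fiber of each $Q$, so the fiber-size equality survives with the literal equation $C^\dagger P C = Q$. Option (ii) follows because multiplication on $C$ by a Clifford that stabilizes $P$ up to a chosen phase partitions $\mathcal{C}_n$ compatibly with the phases — concretely, if the scalar subgroup or an appropriately phased element is included in $\opn{Stab}(P)$, the four signed preimages of $Q$ are equidistributed. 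Apart from this bookkeeping, the argument is a direct application of transitive group actions and requires no further calculation.
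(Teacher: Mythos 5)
The paper does not actually prove this lemma; it is imported from~[ABE10] with no argument given (the appendix proves only the Clifford Twirl). So there is no in-paper proof to compare against, and your write-up would be genuinely new material. Your route --- transitivity of the conjugation action via an explicit reduction of every non-identity Pauli to $X\otimes I^{\otimes(n-1)}$ (single-qubit Cliffords to turn each non-identity factor into $X$, then CNOTs using $\opn{CNOT}(X\otimes X)\opn{CNOT}^\dagger = X\otimes I$), followed by orbit--stabilizer to equate all fiber sizes --- is the standard and correct way to establish the count, and both halves of the argument go through.

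The one place where you should be careful is the phase discussion, and specifically your option~(ii), which as stated does not do what you want. Since $P$ and $C^\dagger P C$ are Hermitian, the ambiguity is only a sign, $C^\dagger P C = \pm Q$; and because some Clifford (e.g.\ a Pauli anticommuting with $Q$, composed with any $C_0$ in the fiber) maps $P$ to $-Q$ whenever some Clifford maps $P$ to $+Q$, the two signed fibers are indeed equal in size. But that equidistribution means the \emph{literal} fiber $\{C : C^\dagger P C = Q\}$ has size $\tfrac{1}{2}\cdot\frac{\abs{\mathcal{C}_n}}{\abs{\mathds{P}_n}-1}$, i.e.\ half the stated count --- one can check this for $n=1$, $P=Z$, where the exact stabilizer of $Z$ has index $6$, not $3$, in $\mathcal{C}_1$. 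So option~(ii) is an argument \emph{against} the literal reading, not a rescue of it. The correct resolution is your option~(i): the lemma is to be read projectively (conjugation maps the phase class of $P$ uniformly onto the $\abs{\mathds{P}_n}-1$ non-identity phase classes), which is both the convention of~[ABE10] and all that the security proof uses, since the randomized Pauli only ever appears as $\tilde P\rho\tilde P^\dagger$, where the sign cancels. State that reading explicitly, run your transitivity and coset argument on phase classes, and the proof is complete.
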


We also state a lemma that is analogous to the Pauli twirl, the \emph{Clifford Twirl}~\cite{DCEL09}; for completeness, the proof is given in~\cref{sec:app-Clifford}.
\begin{lemma}[Clifford Twirl]\label{lem:Cliff-twirl}
Let $P\neq P'$ be Pauli operators. For any~$\rho$ it holds that:
\begin{equation*}
\sum\limits_{C \in \mathcal{C}_n} C^{\dagger}PC \rho C^{\dagger}P'C =0.
\end{equation*}
\end{lemma}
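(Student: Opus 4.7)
The plan is to exploit the fact that the Pauli group $\mathds{P}_n$ sits inside the Clifford group $\mathcal{C}_n$ as a subgroup, so that the sum over Cliffords can be reorganized as an outer sum over cosets of $\mathds{P}_n$ in $\mathcal{C}_n$ and an inner sum over the Pauli group. The inner sum will then be a (scalar multiple of a) Pauli twirl, which vanishes by Lemma~\ref{lem:Pauli-twirl}.

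Concretely, I would fix a set of coset representatives $\{D_1,\dots,D_k\}$ so that every $C\in\mathcal{C}_n$ is uniquely written as $C = D_iQ$ with $Q\in\mathds{P}_n$. Substituting into the target sum and using $C^\dagger P C = Q^\dagger(D_i^\dagger P D_i)Q$, and similarly for $P'$, I can factor out the $D_i$ conjugations. Because $D_i$ is Clifford, $D_i^\dagger P D_i = \alpha_i P_i$ and $D_i^\dagger P' D_i = \alpha_i' P_i'$ for some phases $\alpha_i,\alpha_i' \in \{\pm 1,\pm i\}$ and Paulis $P_i,P_i' \in \mathds{P}_n$. The sum restricted to a single coset then reads
\[
\alpha_i\alpha_i'\sum_{Q\in\mathds{P}_n} Q^\dagger P_i\, Q\, \rho\, Q^\dagger P_i'\, Q,
\]
which is (up to an overall scalar and the normalization $|\mathds{P}_n|$) exactly the quantity appearing in the Pauli twirl.

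To finish, I need to check that $P\neq P'$ forces $P_i\neq P_i'$. This is the only delicate point, and I would address it by noting that Clifford conjugation is a bijection on the set of phased Paulis $\{\pm 1,\pm i\}\cdot \mathds{P}_n$; since two distinct elements of $\mathds{P}_n$ cannot differ only by a global scalar (the representatives in $\mathds{P}_n$ carry no free phase), $D_i^\dagger P D_i$ and $D_i^\dagger P' D_i$ must reduce to different Paulis in $\mathds{P}_n$. With $P_i\neq P_i'$, Lemma~\ref{lem:Pauli-twirl} makes every coset contribution vanish, and summing over $i$ gives the desired identity.

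The main obstacle is thus the phase bookkeeping in the step $D_i^\dagger P D_i = \alpha_i P_i$: one must be careful to argue that the difference between $P$ and $P'$ is preserved as a difference between the underlying Paulis $P_i, P_i'$, rather than being absorbed into a phase discrepancy (where the Pauli twirl no longer applies). Once that is in place, the rest of the proof is just a routine reorganization of the Clifford sum and an appeal to the Pauli twirl.
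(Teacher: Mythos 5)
Your proposal is correct and follows essentially the same route as the paper: decompose $\mathcal{C}_n$ into cosets of the subgroup $\mathds{P}_n$, reduce each coset's contribution to a Pauli twirl of the conjugated Paulis, and invoke Lemma~\ref{lem:Pauli-twirl}. Your explicit handling of the phases $\alpha_i,\alpha_i'$ and the check that $P\neq P'$ forces $P_i\neq P_i'$ (via linear independence of distinct unsigned Paulis) is in fact slightly more careful than the paper's version, which silently absorbs the phase into the statement $C_i^{\dagger}PC_i=Q_i$.
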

Finally, we note that sampling a uniformly random Clifford can be done efficiently \cite{Got97}.

\section{Quantum Message Authentication}\label{sec:defns}

Following \cite{DNS12}, we define a \emph{quantum message authentication scheme} as a pair of encoding and decoding maps that satisfy the following:
\begin{defn}[Quantum message authentication scheme] \label{defn:QAS}
A \emph{quantum message authentication scheme} is a polynomial-time set of encoding and decoding channels $\{(\E_k^{\regM \rightarrow \regC},\D_k^{\regC \rightarrow \regM\regF}) \mid k \in \K \}$, where $\K$ is the set of possible keys,~$\regM$ is the input system, $\regC$ is the encoded system, and $\regF$ is a flag system that is spanned by two orthogonal states: $\accstatepure$ and $\rejstatepure$, such that for all $\rho_M$,  $(\D_k \circ \E_k)(\rho_M)=\rho_M\otimes \accstate$.
\end{defn}

In order to define security for a quantum message authentication scheme, we first consider a reference system~$\regR$, so that the input can be described as $\rho_{\regM\regR}$ and we can furthermore assume that the system consisting of the encoded message, together with the reference system, undergoes a unitary adversarial attack $U_{\regC\regR}$.  For a fixed key, $k$, we thus define the \emph{real-world} channel as:
\begin{equation}\label{eqn:real-world}
{\mathscr{E}_k}^{\regM\regR \rightarrow \regM\regR\regF}: \rho_{\regM\regR} \mapsto (\mathcal{D}_{k}\otimes \mathds{I}_R)(U_{\regC\regR} (\mathcal{E}_{k} \otimes \mathds{I}_R)(\rho_{\regM\regR})U_{\regC\regR}^{\dagger}),
\end{equation}
where $\mathds{I}_R$ is the identity map on the reference system, $R$. From now on, we will not include the identity maps, since it will be clear from context which system undergoes a linear map and which one does not.

Security is given in terms of the existence of a \emph{simulator}, which has access only to the ideal functionality. This ideal functionality either accepts (and leaves the message register $\regM$ intact), or rejects (and outputs a fixed state $\Omega_{M}$); the simulator can interact with the ideal functionality by selecting \emph{accept} or \emph{reject}. In both cases, the simulator can also alter the reference system $\regR$. This ideal-world process is modeled by the quantum channel $\mathscr{F}$, called the \emph{ideal} channel, where for each attack, $U_{CR}$, there exists two CP maps $\mathscr{U}^{acc}$ and $\mathscr{U}^{rej}$ acting only on the reference system $\regR$ such that $\mathscr{U}^{acc}+\mathscr{U}^{rej}=\mathds{1}$:
\begin{multline} \label{eqn:ideal-world}
\mathscr{F}^{\regM\regR \rightarrow \regM\regR\regF}: \rho_{\regM\regR} \rightarrow (\mathds{1}_M \otimes \mathscr{U}^{acc}_R)\rho_{MR} \otimes \accstate
+ tr_{\regM}((\mathds{1}_M \otimes \mathscr{U}^{rej}_R)\rho_{\regM\regR}) \Omega_{M}\otimes \rejstate.
\end{multline}

\begin{defn}[Security of quantum message authentication] \label{defn:QAS-security}
Let \\ \mbox{$\{(\E_k^{\regM \rightarrow \regC},\D_k^{\regC \rightarrow \regM\regF}) \mid k \in \K \}$}~be a quantum message authentication scheme, with keys~$k$ chosen from $\K$. Then the scheme is $\epsilon$-secure if for all attacks, there exists a simulator such that:\looseness=-1
\begin{equation}
 D\Big(\frac{1}{\abs{\K}}\sum_{k \in \K}\mathscr{E}_k(\rho_{\regM\regR}), \mathscr{F}(\rho_{\regM\regR})\Big ) \leq \epsilon, \forall \rho_{\regM\regR}.
\end{equation}
Furthermore, we require that if $\mathscr{E}_k$ is polynomial-time in the size of the input register $\regM$, then~$\mathscr{F}$ is also polynomial-time in the size of the input register, $\regM$.
\end{defn}

We note that this definition is similar to the definition in~\cite{DNS12}; however we require a \emph{polynomial-time simulation} whenever the attack is polynomial-time. This does not limit the proof to polynomial-time attacks, but merely restricts the simulator to have at most the complexity of the attack. This condition being satisfied is typically a crucial ingredient in order for the composability to carry through~\cite{Unr10}.

\section{Quantum Message Authentication Schemes}
\label{sec:schemes}

Here, we present two quantum message authentication schemes, the \emph{Clifford} code (\cref{sec:CliffordCode}) and the \emph{trap} code (\cref{sec:TrapCode}). The two encoding procedures both proceed by appending trap qubits (in a fixed state) to the message register, and then \emph{twirling} by a Clifford (for the Clifford code) or a Pauli (for the trap code). The trap code also has a permutation in addition to the Pauli twirl acting on the message register. Decoding simply consists of undoing the permutation in the trap code and then in both cases measuring the traps to check for any sign of tampering. In the case of the Clifford code, only one set of traps (all in the same state) is needed because the Clifford twirl breaks any Pauli attack into a uniform mixture of Paulis which is detected on the traps with high probability. The trap code, however, relies on two sets of traps (in two different states) with both a Pauli twirl and a permutation of the message and trap qubits. Furthermore, the trap scheme requires that we first \emph{encode} the input message into an error correcting code (essentially, this is because the Pauli twirl is not as powerful as the Clifford twirl and will catch only high-weight Pauli attacks with the error correcting code taking care of the low-weight ones).

\subsection{The Clifford Code}\label{sec:CliffordCode}

We define a message authentication scheme using a Clifford encryption as follows:
\begin{enumerate}
\item The encoding, $\mathcal{E}_k^{M \rightarrow C}$, takes as input an $n$-qubit message in the $M$ system; it appends an additional $d$-qubit trap register in the state~$\ket{0}\bra{0}^{\otimes d}$. A uniformly random Clifford is then applied to the resulting $n+d$-qubit register, according to the key, $k$. The output register is called~$C$.

Mathematically, the encoding, $\mathcal{E}_{k}^{\regM \rightarrow \regC}$, indexed by a secret key, $k$, on input $\rho_{M}$ (where $C_k$ the $k^{\text{th}}$ Clifford) is given by:
\begin{equation}
\mathcal{E}_{k}: \rho_{M} \mapsto C_{k} (\rho_{M} \otimes \ket{0}\bra{0}^{\otimes d})C_{k}^{\dagger}.
\end{equation}
\item The decoding, $\mathcal{D}_k^{C \rightarrow MF}$, takes the $C$ register and applies the inverse Clifford, according to the key, $k$. The last $d$ qubits are then measured in the computational basis. If this measurement returns $\ket{0}\bra{0}^{\otimes d}$ then an additional qubit $\accstate$ is appended in the flag system, $F$. If the measurements return anything else, then the remaining system, $M$, is traced out and replaced with a fixed $n$-qubit state, $\Omega_M$, and an additional qubit, $\rejstate$, is appended in the flag system.

Mathematically, the decoding, $\mathcal{D}_{k}^{\regC \rightarrow \regM\regF}$, also indexed by the secret key, $k$, is given by:
\begin{multline}
\mathcal{D}_{k} : \rho_{C}  \mapsto tr_{0}(\mathcal{P}_{acc}C_{k}^{\dagger}(\rho_{C})C_{k}\mathcal{P}_{acc}^{\dagger})\otimes \accstate
 + tr_{M,0}(\mathcal{P}_{rej} C_{k}^{\dagger}(\rho_{C})C_{k} \mathcal{P}_{rej}^{\dagger})\Omega_M \otimes \rejstate,
\end{multline}
where $\mathcal{P}_{acc}= \mathds{1}^{\otimes n} \otimes \ket{0}\bra{0}^{\otimes d}$ and $\mathcal{P}_{rej}= \mathds{1}^{\otimes (n+d)}-\mathcal{P}_{acc}$ are measurement projectors representing the trap qubits being in their initial states or altered, respectively. Finally, $tr_{0}$ refers to the trace over the $d$ trap qubits.
\end{enumerate}

\subsection{The Trap Code}\label{sec:TrapCode}
We  define a trap code message authentication scheme as follows:

\begin{enumerate}
\item The encoding, $\mathcal{E}_k^{M \rightarrow C}$, takes as  input $\rho_{M}$ and applies an $[[n,1,d]]$-error correcting code to the single-qubit $M$ register, which will correct up to $t$ errors (where $d=2t+1$). It then appends two additional $n$-qubit trap registers, the first in the state $\ket{0}\bra{0}^{\otimes n}$ and the second in the state $\ket{+}\bra{+}^{\otimes n}$. The resulting $3n$-qubit register is then permuted and a Pauli encryption is applied, according to the key, $k$. The resulting register is called~$C$.

Mathematically the encoding, $\mathcal{E}_{k}^{\regM \rightarrow \regC}$, indexed by a two-part secret key $k=(k_1, k_2)$ is given by:
\begin{equation}
\mathcal{E}_{k}: \rho_{M} \mapsto P_{k_2}\pi_{k_1}(Enc_M(\rho_{M}) \otimes \ket{0}\bra{0}^{\otimes n} \otimes \ket{+}\bra{+}^{\otimes n})\pi_{k_1}^{\dagger}P_{k_2},
\end{equation}
where $Enc_M(\rho_{M})$ represents the input state after the error correcting code has been applied to the $M$ system, $\pi_{k_1}$ is the $k_1^{th}$ permutation and $P_{k_2}$ is the $k_2^{th}$ Pauli matrix.

We note that we use the error-correcting properties of the code only (it is sufficient in our context to simply correct low-weight Paulis on the message, as opposed detecting them and rejecting).

\item The decoding, $\mathcal{D}_k^{C \rightarrow MF}$, takes the $C$ register and applies the inverse Pauli and then the inverse permutation according to the key, $k$. The last $n$ qubits are then measured in the Hadamard basis and the second last $n$ qubits are measured in the computational basis. If these two measurements return $\ket{+}\bra{+}^{\otimes n}$ and $\ket{0}\bra{0}^{\otimes n}$ respectively, then an additional qubit $\accstate$ is appended in the flag system $F$ and the resulting $M$ register is decoded (according to the error correcting code applied in the encoding). If the measurements return anything else, then the remaining system $M$ is traced out and replaced with a fixed single-qubit state $\Omega_M$ and an additional qubit, $\rejstate$, is appended in the flag system.

Define $\mathds{P}_{\mathscr{E}}= \{ P \otimes R \otimes Q | P \in \mathds{P}_{n}, R \in \{I,Z\}^{\otimes n}, Q \in \{I,X\}^{\otimes n} \}$. Then define the measurement projector corresponding to the protocol accepting as $\mathcal{P}_{acc}= \mathds{1}^{\otimes n} \otimes \ket{0}\bra{0}^{\otimes n} \otimes \ket{+}\bra{+}^{\otimes n}$. The accepted states are then the states that can be achieved by applying any $P \in \mathds{P}_{\mathscr{E}}$ to $\rho_M \otimes \ket{0}\bra{0}^{\otimes n} \otimes \ket{+}\bra{+}^{\otimes n}$. We define $\mathcal{P}_{rej}= \mathds{1}^{\otimes 3n}-\mathcal{P}_{acc}$, the measurement projector corresponding to the protocol rejecting, where the states achieved by applying any $P \in \mathds{P}_{3n} \setminus \mathds{P}_{\mathscr{E}}$ to $Enc_M(\rho_M) \otimes \ket{0}\bra{0}^{\otimes n} \otimes \ket{+}\bra{+}^{\otimes n}$ are rejected.

Mathematically, the decoding, $\mathcal{D}_{k}^{\regC \rightarrow \regM\regF}$, also indexed by the two-part secret key, $k$, is given by:
\begin{multline}
\mathcal{D}_{k} : \rho_{C}  \mapsto Dec_M tr_{0,+}(\mathcal{P}_{acc}\pi^{\dagger}_{k_1}P_{k_2}(\rho_{C})P_{k_2}\pi_{k_1}\mathcal{P}_{acc}^{\dagger})\otimes \accstate \\
+ tr_{M,0,+}(\mathcal{P}_{rej} \pi^{\dagger}_{k_1}P_{k_2}(\rho_{C})P_{k_2} \pi_{k_1}\mathcal{P}_{acc}^{\dagger})\Omega_M \otimes \rejstate,
\end{multline}
where $Dec_M$ is the decoding of the error correcting code applied in the encryption and $tr_{0,+}$ refers to the trace over the last two sets of $n$ trap qubits.
\end{enumerate}

\section{Security of Quantum Message Authentication Schemes}
\label{sec:security}

In this section, we present simulation-based proofs for the Clifford (\cref{sec:Clifford-proof}) and the trap (\cref{sec:trap-proof}) codes. At a high level, the security of the two codes is analyzed in very similar ways (see the discussion in \cref{sec:intro}).  The main idea (in both cases) is to use a simulator that replaces the encoded message in $\regC$ with half EPR pairs, without encryption in the Clifford code, and with only a permutation in the trap code; the attack is then applied to these half EPR pairs, as well as any reference system~$\regR$. From there we are able to compare the accepted and rejected states between the real world and ideal protocols in order to find the upper bound for the trace distance between them. We will notice that these differences are the cases where the real world protocol accepts something that the simulator rejects. Specifically, this is where an attack gets through and changes a logical qubit but is not detected in the traps. Of course, these same states are not rejected by the real world protocol but they are rejected by the simulator. Because the Clifford twirl maps any non-identity Pauli attack to a uniform mixture of non-identity Paulis, the bound for this distance is simple to compute in the case of the Clifford code. In the case of the trap code, a more complicated argument is needed based on permuting the attack and a combinatorial argument that bounds the undetected attacks that can alter the logical data.

\subsection{Security of the Clifford Code}
\label{sec:Clifford-proof}

\subsubsection{Simulator.}

Recall (\cref{sec:defns}) that the simulator interacts with the ideal functionality by only altering the reference system and selecting either \emph{accept} or \emph{reject}. Given the attack, $U_{CR}$, to which the simulator has access, the simulator will apply the attack to half EPR pairs in place of the $C$ system and then perform a Bell basis measurement on the EPR pairs. It will select \emph{accept} if the EPR pairs are still in their original state, and \emph{reject} otherwise.  Let  $\mathcal{P}_{acc}^{\mathscr{U}}=\mathds{1}_{MR} \otimes \ket{\Phi^{+}}\bra{\Phi^{+}}^{\otimes (n+d)}_{C_1C_2}$ and $\mathcal{P}_{rej}^{\mathscr{U}}=\mathds{1} - \mathcal{P}_{acc}^{\mathscr{U}}$. The ideal channel is then:
\begin{multline}
\mathscr{F}^{\regM\regR \rightarrow \regM\regR\regF}: \rho_{\regM\regR} \rightarrow \\
 tr_{C_1C_2} (\mathcal{P}_{acc}^{\mathscr{U}}U_{C_1R}(\rho_{MR} \otimes \ket{\Phi^{+}}\bra{\Phi^{+}}^{\otimes (n+d)}_{C_1C_2})U_{C_1R}^{\dagger}\mathcal{P}_{acc}^{\mathscr{U}\dagger}) \otimes \accstate \\
 + tr_{\regM}(tr_{C_1C_2} (\mathcal{P}_{rej}^{\mathscr{U}}U_{C_1R}(\rho_{MR} \otimes \ket{\Phi^{+}}\bra{\Phi^{+}}^{\otimes (n+d)}_{C_1C_2})  %\\
  U_{C_1R}^{\dagger}\mathcal{P}_{rej}^{\mathscr{U}\dagger})) \Omega_{M}\otimes \rejstate.
\end{multline}

According to the above, we define $\mathscr{U}^{acc}$ and $\mathscr{U}^{rej}$ that satisfy \cref{eqn:ideal-world} as:
\begin{equation}
\mathscr{U}^{acc}: \rho_{R}  \rightarrow tr_{C_1C_2} (\mathcal{P}_{acc}^{\mathscr{U}}U_{C_1R}(\rho_{R} \otimes \ket{\Phi^{+}}\bra{\Phi^{+}}^{\otimes (n+d)}_{C_1C_2})U_{C_1R}^{\dagger}\mathcal{P}_{acc}^{\mathscr{U}\dagger}),
\end{equation}
and
\begin{equation}
\mathscr{U}^{rej}: \rho_{R}  \rightarrow tr_{C_1C_2} (\mathcal{P}_{rej}^{\mathscr{U}}U_{C_1R}(\rho_{R} \otimes \ket{\Phi^{+}}\bra{\Phi^{+}}^{\otimes (n+d)}_{C_1C_2})U_{C_1R}^{\dagger}\mathcal{P}_{rej}^{\mathscr{U}\dagger}).
\end{equation}

For a fixed attack $U_{CR} = \sum\limits_{P \in \mathds{P}_{n+d}}\alpha_P P_C \otimes U_R^P$, with $\sum\limits_{P \in \mathds{P}_{n+d}} \abs{\alpha_P}^2=1$, we note the effects of $\mathscr{U}^{acc}$ and $\mathscr{U}^{rej}$, recalling, of course, that $\mathscr{U}^{acc}(\rho_{\regM\regR})$ is understood to be $(\mathds{1}_{M}\otimes \mathscr{U}^{acc})(\rho_{\regM\regR})$, with the same understanding for $\mathscr{U}^{rej}$:
 \begin{align}
 \mathscr{U}^{acc}(\rho_{\regM\regR})  & = tr_{C_1C_2} (\mathcal{P}_{acc}^{\mathscr{U}}U_{C_1R} (\rho_{MR} \otimes \ket{\Phi^{+}}\bra{\Phi^{+}}^{\otimes (n+d)}_{C_1C_2})U_{C_1R}^{\dagger}\mathcal{P}_{acc}^{\mathscr{U}\dagger}) \notag \\
& =  \abs{\alpha_{\mathds{1}}}^2 (\mathds{1}_M \otimes U_R^{\mathds{1}}) \rho_{MR} (\mathds{1}_M \otimes U_R^{\mathds{1}\dagger})   \label{eqn:Cliff-Uacc} \\
 \mathscr{U}^{rej}(\rho_{\regM\regR}) & = tr_{C_1C_2} (\mathcal{P}_{rej}^{\mathscr{U}} \Big(\sum\limits_{P\neq \mathds{1}}\abs{\alpha_P}^2 P_{C_1} \otimes U_R^P \Big)  \notag  \\
& \qquad (\rho_{MR} \otimes \ket{\Phi^{+}}\bra{\Phi^{+}}^{\otimes (n+d)}_{C_1C_2})  \Big(\sum\limits_{P\neq \mathds{1}}\abs{\alpha_P}^2 P_{C_1} \otimes U_R^{P\dagger} \Big)    \mathcal{P}_{rej}^{\mathscr{U}\dagger}) \notag \\
& = \sum\limits_{P\neq \mathds{1}}\abs{\alpha_P}^2 (\mathds{1}_M \otimes U_R^P)(\rho_{MR})( \mathds{1}_M \otimes U_R^{P\dagger}).\label{eqn:Cliff-Urej}
  \end{align}

We are now ready to state and prove our main theorem on the security of the Clifford message authentication scheme.
\begin{theorem}
Let $\{(\E_k^{\regS \rightarrow \regC},\D_k^{\regC \rightarrow \regS\regF}) \mid k \in \K \}$ be the Clifford quantum message authentication scheme, with parameter~$d$. Then the Clifford code is an $\epsilon$-secure quantum authentication scheme, for $\epsilon \leq \frac{3}{2^d}$.
\end{theorem}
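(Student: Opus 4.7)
The plan is to expand both sides of the trace distance into canonical Pauli-indexed forms and compare them term-by-term. Writing the adversary's attack as $U_{CR} = \sum_{P \in \mathds{P}_{n+d}} \alpha_P\, P_C \otimes U_R^P$, I would substitute into the real-world channel \eqref{eqn:real-world} and average over the Clifford key. Each resulting term is sandwiched on the code register by $C_k^\dagger P C_k$ on the left and $C_k^\dagger P' C_k$ on the right; since the reference factors $U_R^P, U_R^{P'\dagger}$ are key-independent and pull outside the sum, the Clifford Twirl (\cref{lem:Cliff-twirl}) annihilates all cross terms with $P \neq P'$, leaving a diagonal mixture $\sum_P |\alpha_P|^2 (\cdot)$. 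The $P = \id$ contribution passes through the decoder unchanged and, after $\mathcal{P}_{acc}$ and trace over the trap, matches $\mathscr{U}^{acc}$ from \eqref{eqn:Cliff-Uacc} exactly.

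For each $P \neq \id$ I would invoke the Clifford Randomization Lemma (\cref{lem:Cliff-decomp}) to rewrite the key-averaged conjugation as a uniform mixture over non-identity Paulis:
\[
\frac{1}{|\mathcal{C}_{n+d}|}\sum_k C_k^\dagger P C_k \,(\cdot)\, C_k^\dagger P C_k \;=\; \frac{1}{4^{n+d}-1}\sum_{Q \neq \id} Q\,(\cdot)\,Q.
\]
Splitting $Q = Q_M \otimes Q_T$ across the $n$-qubit message and $d$-qubit trap, the projector $\mathcal{P}_{acc} = \id^{\otimes n} \otimes \ketbra{0}{0}^{\otimes d}$ preserves the contribution iff $Q_T \in \{I,Z\}^{\otimes d}$ (since $Z\ket{0} = \ket{0}$, while any $X$- or $Y$-component forces $\bra{0}^{\otimes d} Q_T \ket{0}^{\otimes d} = 0$). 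Exactly $4^n \cdot 2^d - 1$ of the $4^{n+d}-1$ non-identity $Q$'s thus slip past the traps, each imparting a (possibly non-trivial) Pauli error $Q_M$ on the message. In contrast, the ideal simulator detects any $P \neq \id$ via its Bell-basis test and routes the entire $P \neq \id$ mass to reject (cf.\ \eqref{eqn:Cliff-Urej}). The real/ideal accept-branch mismatch is therefore the positive operator
\[
\sum_{P \neq \id}|\alpha_P|^2 \frac{1}{4^{n+d}-1} \sum_{\substack{Q \neq \id \\ Q_T \in \{I,Z\}^{\otimes d}}} (Q_M \otimes U_R^P)\rho_{MR}(Q_M \otimes U_R^{P\dagger}) \otimes \accstate,
\]
and by conservation of trace a matching positive deficit appears in the reject branch.

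Since $\accstate$ and $\rejstate$ are orthogonal, the trace distance splits as $D = \tfrac{1}{2}(\|\text{acc diff}\|_1 + \|\text{rej diff}\|_1)$; each block is a positive operator, so its trace norm equals its trace, and both halves evaluate to $(1-|\alpha_\id|^2)\tfrac{4^n \cdot 2^d - 1}{4^{n+d}-1}$. Summing and bounding $\tfrac{4^n \cdot 2^d - 1}{4^{n+d}-1} \leq \tfrac{1}{2^d} \leq \tfrac{3}{2^d}$ delivers the security bound; efficiency of the simulator is immediate from its description (prepare EPR pairs, run the attack, perform a Bell-basis measurement). I expect the main obstacle to be the careful bookkeeping: verifying that precisely the ``$Z$-type'' trap errors slip through, that the Clifford Twirl is valid when $R$ is entangled with $M$ (true because the $R$-side factors are key-independent and pull outside the Clifford sum), and that the positive-operator form of both mismatch blocks lines up so that their traces coincide.
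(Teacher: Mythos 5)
Your proposal is correct and follows essentially the same route as the paper: Pauli-decompose the attack, use the Clifford Twirl to kill cross terms, apply Clifford Randomization to the non-identity part, count the $4^n 2^d - 1$ trap-evading Paulis, and split the trace distance over the orthogonal accept/reject flags to get $\frac{4^n 2^d - 1}{4^{n+d}-1} \leq \frac{3}{2^d}$ (your intermediate bound $\leq \frac{1}{2^d}$ is in fact tighter than the paper's). The only nitpick is that each mismatch block's trace is $\tr[\mathscr{U}^{rej}(\rho_{MR})] \cdot \frac{4^n 2^d - 1}{4^{n+d}-1}$ rather than exactly $(1-|\alpha_{\id}|^2)\frac{4^n 2^d - 1}{4^{n+d}-1}$, since the $U_R^P$ need not be unitary; bounding that trace by $1$ gives the same result.
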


\begin{proof}
We will follow the proof structure used in \cite{DNS12,ABE10}.

Using the simulator described above, we wish to show that:
\begin{equation}
\label{eqn:Cliff-trace-distance}
D\Big( \frac{1}{\abs{\K}} \sum\limits_{k \in \K} \mathscr{E}_k(\rho_{MR}), \mathscr{F}(\rho_{MR}) \Big) \leq \epsilon, \forall \rho_{MR}.
\end{equation}

Consider a general attack $U_{CR}$, written as  $U_{CR} = \sum\limits_{P \in \mathds{P}_{n+d}}\alpha_P P_C \otimes U_R^P$ where $\sum\limits_{P \in \mathds{P}_{n+d}} \abs{\alpha_P}^2=1$. The real-world channel is then represented as:
\begin{multline}\label{eqn:Cliff-real-w-attack}
{\mathscr{E}_k}^{\regM\regR \rightarrow \regM\regR\regF}: \rho_{\regM\regR} \mapsto \mathcal{D}_{k}\Big(\Big(\sum\limits_{P \in \mathds{P}_{n+d}}\alpha_P P_C \otimes U_R^P\Big) \mathcal{E}_{k}(\rho_{\regM\regR}) % \\
 \Big(\sum\limits_{P \in \mathds{P}_{n+d}}\overline{\alpha_P} P_C \otimes U_R^{P\dagger}\Big)\Big).
\end{multline}

We will use $\psi=\rho_{MR} \otimes \ket{0}\bra{0}^{\otimes d}$ to simplify the following expressions. Consider the effect of the real protocol on input $\rho_{MR}$ with attack $\sum\limits_{P \in \mathds{P}_{n+d}}\alpha_P P_C \otimes U_R^P$, conditioned on acceptance:
\begin{multline}
\frac{1}{\abs{\mathcal{K}}}\sum\limits_{k \epsilon \mathcal{K}} tr_{0}\Big(\mathcal{P}_{acc}C_{k}^{\dagger}\Big(\sum\limits_{P \in \mathds{P}_{n+d}}\alpha_P P_C \otimes U_R^P\Big)(C_{k}\psi C_{k}^{\dagger}) %\\
 \Big(\sum\limits_{P \in \mathds{P}_{n+d}}\overline{\alpha_P}P_C^{\dagger} \otimes U_R^{P\dagger}\Big)C_{k} \mathcal{P}_{acc}^{\dagger}\Big) \otimes \accstate.
\end{multline}
Now we can apply the Clifford Twirl (\cref{lem:Cliff-twirl}), since the sum over all keys is, of course, the sum over all Cliffords (since the keys index all $n+d$-qubit Cliffords) and then simply split the sum over all Paulis into the case with the identity Pauli from the attack, and all other Paulis. What we are left with is:
\begin{align}
& \frac{1}{\abs{\mathcal{K}}}\sum\limits_{k \epsilon \mathcal{K}} tr_{0}\Big(\sum\limits_{P \in \mathds{P}_{n+d}}\abs{\alpha_P}^2 \mathcal{P}_{acc}C_{k}^{\dagger}(P_C \otimes U_R^P)(C_{k}\psi C_{k}^{\dagger}) % \notag \\\
%& \qquad  \qquad \qquad \qquad \qquad \qquad \qquad \qquad
 (P_C^{\dagger} \otimes U_R^{P\dagger})C_{k} \mathcal{P}_{acc}^{\dagger}\Big) \otimes \accstate \notag \\
& =\frac{1}{\abs{\mathcal{K}}}\sum\limits_{k \epsilon \mathcal{K}} tr_{0}\Big(\abs{\alpha_{\mathds{1}}}^2 \mathcal{P}_{acc}C_{k}^{\dagger}(\mathds{1}_C \otimes U_R^{\mathds{1}})(C_{k}\psi C_{k}^{\dagger})  %\notag \\
%&  \qquad \qquad \qquad \qquad \qquad \qquad \qquad \qquad
 (\mathds{1}_C \otimes U_R^{\mathds{1} \dagger})C_{k} \mathcal{P}_{acc}^{\dagger}\Big) \otimes \accstate \notag \\
& \quad +\frac{1}{\abs{\mathcal{K}}}\sum\limits_{k \epsilon \mathcal{K}} tr_{0}\Big(\sum\limits_{P\neq\mathds{1}}\abs{\alpha_P}^2\mathcal{P}_{acc}C_{k}^{\dagger}(P_C \otimes U_R^P)(C_{k}\psi C_{k}^{\dagger}) % \notag \\
%& \qquad \qquad \qquad \qquad \qquad \qquad
 (P_C^{\dagger} \otimes U_R^{P\dagger})C_{k} \mathcal{P}_{acc}^{\dagger}\Big) \otimes \accstate.
\end{align}
Clearly the first term is exactly what the simulator will accept, and the second term is in exactly the right form to use a Clifford Randomization (\cref{lem:Cliff-decomp}), resulting in:
\begin{multline}
=\mathscr{U}^{acc}(\rho_{MR}) \otimes %\accstate \\
+ \frac{1}{\abs{\mathcal{C}_n}} tr_{0}\Big(\sum\limits_{\tilde P \neq \mathds{1}}\sum\limits_{P\neq\mathds{1}} \abs{\alpha_P}^2 \frac{\abs{\mathcal{C}_n}}{\abs{\mathds{P}_n}-1} \mathcal{P}_{acc}(\tilde P_C \otimes U_R^P)\psi  %\\
 (\tilde P_C^{\dagger} \otimes U_R^{P\dagger})\mathcal{P}_{acc}^{\dagger}\Big)\otimes \accstate.
\end{multline}
The $\tilde P$s are the results of the Clifford Randomization applied to a Pauli, $P$. The randomization is not applied to the reference system, so the $U_R^P$ terms are not changed by the randomization.
We can use the properties of the trace to move the trace inside the first sum, and we can move the $\frac{\abs{\mathcal{C}_n}}{\abs{\mathds{P}_n}-1}$ coefficient out of both of the sums:
\begin{equation}
=\mathscr{U}^{acc}(\rho_{MR}) \otimes \accstate  %\\
+ \frac{1}{\abs{\mathcal{C}_n}} \frac{\abs{\mathcal{C}_n}}{\abs{\mathds{P}_n}-1} \Big( \sum\limits_{\tilde P \neq \mathds{1}} tr_{0} \sum\limits_{P\neq\mathds{1}} \abs{\alpha_P}^2 \mathcal{P}_{acc}(\tilde P_C \otimes U_R^P)\psi % \\
 (\tilde P_C^{\dagger} \otimes U_R^{P\dagger})\mathcal{P}_{acc}^{\dagger}\Big)\otimes \accstate.
\end{equation}
We recognize the $R$ register in the second sum as the states that the simulator will reject. Recall that the simulator is in terms of the sum over all non-identity Paulis and includes the $\alpha_P$ coefficients. We can therefore write the previous line in terms of the simulator as:
\begin{equation}
=\mathscr{U}^{acc}(\rho_{MR}) \otimes \accstate %\\
+ \frac{1}{\abs{\mathds{P}_{n+d}}-1}\Big(\sum\limits_{\tilde P\neq\mathds{1}} tr_{0} \mathcal{P}_{acc}(\tilde P_C (\mathscr{U}^{rej}(\rho_{MR})  %\\
\otimes \ket{0}\bra{0}^{\otimes d}) \tilde P_C^{\dagger}) \mathcal{P}_{acc}^{\dagger}\Big) \otimes \accstate.
\end{equation}
If we let $\mathds{P}_t$ be the set of all Paulis that do not alter the trap qubits, then when we apply $\mathcal{P}_{acc}$ to the above, we end up with the sum over the $\tilde P \in \mathds{P}_t \setminus \{ \mathds{1} \}$. Therefore the previous line can be simplified to:
\begin{equation}
=\mathscr{U}^{acc}(\rho_{MR}) \otimes \accstate %\\
 + \frac{1}{\abs{\mathds{P}_{n+d}}-1}\sum\limits_{\tilde P \in \mathds{P}_t \setminus \{\mathds{1}\}}tr_{0}(\tilde P_C (\mathscr{U}^{rej}(\rho_{MR}) \otimes \ket{0}\bra{0}^{\otimes d})\tilde P_C^{\dagger})\otimes \accstate.
\end{equation}

The effect of the real protocol on input $\rho_{MR}$ with attack $\sum\limits_{P \in \mathds{P}_{n+d}}\alpha_P P_C \otimes U_R^P$, conditioned on rejection, can be manipulated in the same way:
\begin{align}
&\frac{1}{\abs{\mathcal{K}}}\sum\limits_{k \epsilon \mathcal{K}}\Big( tr_{M,0}\Big(\mathcal{P}_{rej}\C_{k}^{\dagger}\Big(\sum\limits_{P\in \mathds{P}_{n+d}}\alpha_P P_C \otimes U_R^P\Big)(C_{k}(\psi)C_{k}^{\dagger})  %\notag \\
%& \qquad \qquad \qquad\qquad \quad
\Big(\sum\limits_{P \in \mathds{P}_{n+d}} \overline{\alpha_P} P_C^{\dagger} \otimes U_R^{P\dagger}\Big)C_{k} \mathcal{P}_{rej}^{\dagger}\Big)\Big)\Omega_M \otimes \rejstate
\notag \\
&=\frac{1}{\abs{\mathcal{K}}}\sum\limits_{k \epsilon \mathcal{K}} \Big(tr_{M,0}(\abs{\alpha_{\mathds{1}}}^2 \mathcal{P}_{rej}C_{k}^{\dagger}(\mathds{1}_C \otimes U_R^{\mathds{1}})(C_{k}(\psi)C_{k}^{\dagger}) %\notag \\
%& \qquad\qquad\qquad\qquad\qquad\qquad\qquad \quad
(\mathds{1}_C \otimes U_R^{\mathds{1} \dagger})C_{k} \mathcal{P}_{rej}^{\dagger})\Big)\Omega_M \otimes \rejstate \notag \\
&\quad + \frac{1}{\abs{\mathcal{K}}}\sum\limits_{k \epsilon \mathcal{K}} \Big(tr_{M,0}\Big(\sum\limits_{P\neq\mathds{1}}\abs{\alpha_P}^2 \mathcal{P}_{rej}C_{k}^{\dagger}(P_C \otimes U_R^P)(C_{k}(\psi)C_{k}^{\dagger}) % \notag \\
%& \qquad\qquad\qquad\qquad\qquad\qquad\qquad  { }
(P_C^{\dagger} \otimes U_R^{P\dagger})C_{k} \mathcal{P}_{rej}^{\dagger}\Big)\Big)\Omega_M \otimes \rejstate \notag \\
&= \frac{1}{\abs{\mathds{P}_{n+d}}-1} \sum\limits_{\tilde P \neq \mathds{1}} \sum\limits_{P \neq \mathds{1}} \abs{\alpha}^2 \Big(tr_{M,0}(\mathcal{P}_{acc}( \tilde{P}_C \otimes U_R^P)(\psi) %\notag \\
%& \qquad\qquad\qquad\qquad\qquad\qquad\qquad \qquad
  (\tilde{P}_C^{\dagger} \otimes U_R^{P\dagger})\mathcal{P}_{acc}^{\dagger})\Big)\Omega_M \otimes \rejstate \notag \\
&=tr_M(\mathscr{U}^{rej}(\rho_{MR}))\Omega_M \otimes \rejstate %\notag \\
%& \qquad\qquad\qquad\qquad
-\frac{1}{\abs{\mathds{P}_{n+d}}-1} tr_M \Big(\sum\limits_{P \in \mathds{P}_t \setminus \{ \mathds{1}\}} \mathscr{U}^{rej}(\rho_{MR})\Big) \Omega_M \rejstate \notag \\
&= tr_M(\mathscr{U}^{rej}(\rho_{MR}))\Omega_M \otimes \rejstate %\notag \\
%& \qquad\qquad\qquad\qquad
 -\frac{4^n2^d-1}{\abs{\mathds{P}_{n+d}}-1} tr_M(\mathscr{U}^{rej}(\rho_{MR}))\Omega_M \otimes \rejstate.
\end{align}
When we combine the accepted states and the rejected states into the real world protocol given by \cref{eqn:Cliff-real-w-attack}, we can write it in terms of the simulator as:
\begin{align}
\mathcal{D}&_k(U_{CR} \mathcal{E}_k (\rho_{MR}) U_{CR}^{\dagger}) \notag \\
=&\mathscr{U}^{acc}(\rho_{MR}) \otimes \accstate %\notag %\\
%&
+ \frac{1}{\abs{\mathds{P}_{n+d}}-1}\sum\limits_{\tilde P \in \mathds{P}_t \setminus \{ \mathds{1} \}}tr_{0}(\tilde P_C (\mathscr{U}^{rej}(\rho_{MR}) \otimes \ket{0}\bra{0}^{\otimes d})\tilde P_C^{\dagger})\otimes \accstate \notag \\
 &+ tr_M(\mathscr{U}^{rej}(\rho_{MR}))\Omega_M \otimes \rejstate %\notag \\
 %&
  -\frac{4^n2^d-1}{\abs{\mathds{P}_{n+d}}-1} tr_M(\mathscr{U}^{rej}(\rho_{MR}))\Omega_M \otimes \rejstate.
\end{align}

We can therefore write \cref{eqn:Cliff-trace-distance} as:
\begin{multline*}
\frac{1}{2}\Big \lVert \mathscr{U}^{acc}(\rho_{MR}) \otimes \accstate %\\
+ \frac{1}{\abs{\mathds{P}_{n+d}}-1}\sum\limits_{\tilde P \in \mathds{P}_t \setminus \{ \mathds{1} \}}tr_{0}(\tilde P_C (\mathscr{U}^{rej}(\rho_{MR}) \otimes \ket{0}\bra{0}^{\otimes d})\tilde P_C^{\dagger})\otimes \accstate  \\
 + tr_M(\mathscr{U}^{rej}(\rho_{MR}))\Omega_M \otimes \rejstate %\\
 -\frac{4^n2^d-1}{\abs{\mathds{P}_{n+d}}-1} tr_M(\mathscr{U}^{rej}(\rho_{MR}))\Omega_M \otimes \rejstate  \\
 - (\mathscr{U}^{acc}(\rho_{MR}) \otimes \accstate + tr_M(\mathscr{U}^{rej}(\rho_{MR}))\Omega_M \otimes \rejstate) \Big \rVert_{1}  \\
\end{multline*}
\begin{multline}
=\frac{1}{2}\Big \lVert  \frac{1}{\abs{\mathds{P}_{n+d}}-1}\sum\limits_{\tilde P \in \mathds{P}_t \setminus \{ \mathds{1}\} }tr_{0}(\tilde P_C (\mathscr{U}^{rej}(\rho_{MR}) \otimes \ket{0}\bra{0}^{\otimes d})\tilde P_C^{\dagger})\otimes \accstate  \\
- \frac{4^n2^d-1}{\abs{\mathds{P}_{n+d}}-1} tr_M(\mathscr{U}^{rej}(\rho_{MR}))\Omega_M \otimes \rejstate   \Big \rVert_{1}
\end{multline}
Since $\abs{\mathds{P}_t \setminus \{ \mathds{1} \}}=4^n2^d-1$, and the maximum trace distance between two states is $1$, we can see that by the triangle inequality, the above is bounded~by:\looseness=-1
\begin{align}
&\leq \frac{4^n2^d-1}{\abs{\mathds{P}_{n+d}}-1}  \notag \\
&=  \frac{4^n2^d-1}{4^{n+d}-1} =  \frac{1 - \frac{1}{4^n2^d}}{2^d - \frac{1}{4^n2^d}} \notag \\
&\leq 3 \times \frac{1}{2^d}.
\end{align}
This concludes the proof, showing that the Clifford code is $\frac{3}{2^d}$-secure.
\end{proof}
This is identical to the bound of $\frac{6}{2^d}$ achieved in \cite{DNS12} when we consider that we use the trace distance in our definition of security, and \cite{DNS12} uses the trace norm, which differs from the trace distance by a factor of $2$.

\subsection{Security of the Trap Code}
\label{sec:trap-proof}

\subsubsection{Simulator.}
Recall (\cref{sec:defns}) that the simulator interacts with the ideal functionality by only altering the reference system and selecting either \emph{accept} or \emph{reject}. Given the attack, $U_{CR}$, to which the simulator has access, the simulator will apply the attack to randomly permuted half EPR pairs in place of the $C$ system and then de-permute the EPR pairs and perform a Bell basis measurement. It will select \emph{accept} if the first $n$ of the EPR pairs have $\leq t$ errors, the next $n$ of the EPR pairs are either unchanged or have phase flip errors, and the last $n$ of the EPR pairs are either unchanged or have bit flip errors. It will select \emph{reject} otherwise. Let $\mathds{P}_{\mathscr{F}}=\{ P \otimes R \otimes Q |  P \in \mathds{P}_{n}, \omega(P) \leq t, R \in \{I,Z\}^{\otimes n}, Q \in \{I,X\}^{\otimes n} \} $. Specifically, $\mathds{P}_{\mathscr{F}}$ is the set of all Paulis that the ideal protocol will accept being applied to the half EPR pair---Paulis that would apply at most $t$ non-identity Paulis on the message space and would not alter the $\ket{0}\bra{0}^{\otimes n}$ or the $\ket{+}\bra{+}^{\otimes n}$ traps in the real world protocol. Finally, define the measurement projector corresponding to the simulator selecting \emph{accept} as:
\begin{align}
\mathcal{P}_{acc}^{\mathscr{U}} &=\sum\limits_{Q \in \{I,X\}^{\otimes n}} \sum\limits_{R \in \{I,Z\}^{\otimes n}}\sum\limits_{P \in \mathds{P}_{n} \mid \omega(P) \leq t} \mathds{1}_{MR} \otimes (P \otimes R \otimes Q)_{C_1} \notag \\
& \qquad \qquad\qquad\qquad\qquad\qquad  \ket{\Phi^{+}}\bra{\Phi^{+}}_{C_1C_2}^{\otimes 3n} (P \otimes R \otimes Q)_{C_1}  \notag \\
&=\sum_{P \in \mathds{P}_{\mathscr{F}}}  \mathds{1}_{MR} \otimes (P_{C_1} \ket{\Phi^{+}}\bra{\Phi^{+}}_{C_1C_2}^{\otimes 3n}P_{C_1}^\dag),
\end{align}
and the measurement projector corresponding to the simulator selecting \emph{reject} as:
\begin{equation}
\mathcal{P}_{rej}^{\mathscr{U}}=\mathds{1} - \mathcal{P}_{acc}^{\mathscr{U}}.
\end{equation}

The ideal channel with attack $U_{C_1\regR}$ is therefore:
\begin{align}
&\mathscr{F}^{\regM\regR \rightarrow \regM\regR\regF}:\\
&\rho_{\regM\regR} \rightarrow tr_{C_1C_2} \frac{1}{\abs{\Pi_{3n}}}\sum_{\pi \in \Pi_{3n}} \Big(\mathcal{P}_{acc}^{\mathscr{U}} \pi^\dagger_{C_1} U_{C_1R} \pi_{C_1} % \\
%\qquad \qquad \qquad \qquad
 (\rho_{MR} \otimes \ket{\Phi^{+}}\bra{\Phi^{+}}^{\otimes 3n}_{C_1C_2}) \pi_{C_1}^{\dagger}  U_{C_1R}^{\dagger} \pi_{C_1} \mathcal{P}_{acc}^{\mathscr{U}\dagger}\Big) \otimes \accstate \notag \\
& + tr_{\regM}\Big(tr_{C_1C_2} \frac{1}{\abs{\Pi_{3n}}}  \sum_{\pi \in \Pi_{3n}}  \Big(\mathcal{P}_{rej}^{\mathscr{U}} \pi_{C_1}^{\dagger} U_{C_1R} \pi_{C_1} (\rho_{MR} \otimes \ket{\Phi^{+}}\bra{\Phi^{+}}^{\otimes 3n}_{C_1C_2}) % \\
  \pi^\dagger_{C_1}  U_{C_1R}^{\dagger} \pi_{C_1} \mathcal{P}_{rej}^{\mathscr{U}\dagger}\Big)\Big) \Omega_{M}\otimes \rejstate. \notag
\end{align}

For a fixed attack $U_{CR} = \sum\limits_{P \in \mathds{P}_{3n}}\alpha_P P_C \otimes U_R^P$, with $\sum\limits_{P \in \mathds{P}_{3n}} \abs{\alpha_P}^2 =1$ and where for the sake of brevity we will represent  $\rho_{MR} \otimes \ket{\Phi^{+}}\bra{\Phi^{+}}^{\otimes 3n}_{C_1C_2}$ with $\phi_{MRC_1C_2}$, the ideal channel becomes:
\begin{align}
&\mathscr{F}^{\regM\regR \rightarrow \regM\regR\regF}:  \rho_{\regM\regR} \rightarrow \\
%\notag
&tr_{C_1C_2} \frac{1}{\abs{\Pi_{3n}}}\sum_{\pi \in \Pi_{3n}} \Bigg( \mathcal{P}_{acc}^{\mathscr{U}} \pi^\dagger_{C_1} \Big( \sum\limits_{P \in \mathds{P}_{3n}} \alpha_P P_{C_1} \otimes U_R^{P} \Big) \pi_{C_1} \phi_{MRC_1C_2}  %\notag \\
%& \qquad \qquad \qquad\qquad \qquad
 \pi_{C_1}^{\dagger}  \Big( \sum\limits_{P \in \mathds{P}_{3n}} \overline{\alpha_P} P_{C_1} \otimes U_R^{P \dagger} \Big)\pi_{C_1} \mathcal{P}_{acc}^{\mathscr{U}\dagger} \otimes \accstate \notag\\
&+ tr_{\regM} \Big(\mathcal{P}_{rej}^{\mathscr{U}} \pi_{C_1}^{\dagger} \Big( \sum\limits_{P \in \mathds{P}_{3n}} \alpha_P P_{C_1} \otimes U_R^{P} \Big) \pi_{C_1} \phi_{MRC_1C_2}
%\notag \\
%& \qquad\qquad
 \pi^\dagger_{C_1}  \Big( \sum\limits_{P \in \mathds{P}_{3n}} \overline{\alpha_P} P_{C_1} \otimes U_R^{P \dagger} \Big) \pi_{C_1} \mathcal{P}_{rej}^{\mathscr{U}\dagger}\Big) \Omega_{M}\otimes \rejstate \Bigg).\notag
\end{align}
From here we will move the permutations to act on the attack Paulis, since they're all applied to the same register,~$C_1$:
\begin{align}
=&tr_{C_1C_2}\frac{1}{\abs{\Pi_{3n}}} \\
 &\sum_{\pi \in \Pi_{3n}}\Bigg( \Big(\mathcal{P}_{acc}^{\mathscr{U}}  \Big( \sum\limits_{P \in \mathds{P}_{3n}} \alpha_P \pi^\dagger_{C_1} P_{C_1} \pi_{C_1} \otimes U_R^{P} \Big)  \phi_{MRC_1C_2}   %\notag \\
%&\qquad \qquad \qquad \qquad
 \Big( \sum\limits_{P \in \mathds{P}_{3n}} \overline{\alpha_P} \pi_{C_1}^{\dagger} P_{C_1} \pi_{C_1} \otimes U_R^{P \dagger} \Big)\mathcal{P}_{acc}^{\mathscr{U}\dagger}\Big) \otimes \accstate \notag \\
&+ tr_{\regM} \Big (\mathcal{P}_{rej}^{\mathscr{U}}  \Big( \sum\limits_{P \in \mathds{P}_{3n}} \alpha_P \pi^{\dagger}_{C_1}P_{C_1} \pi_{C_1}\otimes U_R^{P} \Big)  \phi_{MRC_1C_2}  %\notag \\
%&\qquad
\Big( \sum\limits_{P \in \mathds{P}_{3n}} \overline{\alpha_P} \pi^\dagger_{C_1} P_{C_1} \pi_{C_1}\otimes U_R^{P \dagger} \Big)  \mathcal{P}_{rej}^{\mathscr{U}\dagger}\Big) \Omega_{M} \otimes \rejstate \Bigg). \notag
\end{align}
Finally we apply the projectors:
\begin{align}
=&tr_{C_1C_2} \frac{1}{\abs{\Pi_{3n}}} \notag\\
&\sum_{\pi \in \Pi_{3n}}\Bigg( \Big( \sum\limits_{P | \pi^{\dagger}P \pi \in \mathds{P}_{\mathscr{F}}} \abs{\alpha_P}^2 (\pi^\dagger_{C_1} P_{C_1} \pi_{C_1} \otimes U_R^{P})(\phi_{MRC_1C_2})  %\\
%\qquad \qquad \qquad \qquad \qquad \qquad \qquad \qquad
%&
(\pi_{C_1}^{\dagger} P_{C_1} \pi_{C_1} \otimes U_R^{P \dagger}) \Big) \otimes \accstate  \notag\\
&+  tr_{\regM} \Big( \sum\limits_{P | \pi^{\dagger} P \pi \notin \mathds{P}_{\mathscr{F}}} \abs{\alpha_P}^2 (\pi^{\dagger}_{C_1}P_{C_1} \pi_{C_1}\otimes U_R^{P}) (\phi_{MRC_1C_2}) %\\
 (\pi^\dagger_{C_1} P_{C_1} \pi_{C_1}\otimes U_R^{P \dagger}) \Big)  \Omega_{M}\otimes \rejstate \Bigg).
\end{align}

We are now ready to present our main theorem on the security of the trap code:
\begin{theorem}
Let $\{(\E_k^{\regS \rightarrow \regC},\D_k^{\regC \rightarrow \regS\regF}) \mid k \in \K \}$ be the trap quantum message authentication scheme with parameter~$t$, the number of bit or phase flip errors that the error correcting code applied to the input message qubit can correct. Then the trap code is an $\epsilon$-secure quantum message authentication scheme, for $\epsilon \leq (\frac{1}{3})^{t+1}$.
\end{theorem}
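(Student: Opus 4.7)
The plan is to follow the structure of the Clifford code proof (\cref{sec:Clifford-proof}), adapting each step to the trap setting. First, expand the real-world channel $\mathscr{E}_k$ with a general attack $U_{CR} = \sum_{P \in \mathds{P}_{3n}} \alpha_P P_C \otimes U_R^P$ and average over the two parts of the key $k=(k_1,k_2)$. Averaging over the Pauli key $k_2$ invokes the Pauli twirl (\cref{lem:Pauli-twirl}), which annihilates all cross terms $P \neq P'$ and leaves a convex combination $\sum_P |\alpha_P|^2 (P_C \otimes U_R^P)(\cdot)(P_C^{\dagger} \otimes U_R^{P\dagger})$. Averaging over the permutation key $k_1$ then uniformly randomizes the positions at which each non-identity entry of $P$ acts, so that the effective Pauli conjugating $Enc_M(\rho_M) \otimes \ket{0}\bra{0}^{\otimes n} \otimes \ket{+}\bra{+}^{\otimes n}$ is $\pi^{\dagger} P \pi$ for a uniformly random $\pi \in \Pi_{3n}$.

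Next, for each pair $(P,\pi)$, classify the induced Pauli $\pi^{\dagger} P \pi$ into one of three sets: (i) $\mathds{P}_{\mathscr{F}}$, on which both channels accept and the error-correcting decoder recovers $\rho_M$; (ii) $\mathds{P}_{\mathscr{E}} \setminus \mathds{P}_{\mathscr{F}}$, the ``gap'' set on which the real protocol accepts (the traps are intact) but the simulator rejects (the induced message-side Pauli has weight exceeding $t$); and (iii) $\mathds{P}_{3n} \setminus \mathds{P}_{\mathscr{E}}$, on which both reject. By construction of the simulator, the contributions from (i) and (iii) match in the real and ideal channels, so (exactly as in the Clifford case) the trace distance reduces to the mass on the gap set, which appears as an extra accept-branch term in the real channel and a compensating reject-branch term in the ideal channel. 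Applying the triangle inequality then bounds the trace distance by $\sum_P |\alpha_P|^2 \cdot \Pr_\pi[\pi^{\dagger} P \pi \in \mathds{P}_{\mathscr{E}} \setminus \mathds{P}_{\mathscr{F}}]$, and since $\sum_P |\alpha_P|^2 = 1$ it suffices to bound this permutation probability by $(1/3)^{t+1}$ for every non-identity~$P$.

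The main obstacle is this last combinatorial estimate. For $\pi^{\dagger} P \pi$ to lie in $\mathds{P}_{\mathscr{E}} \setminus \mathds{P}_{\mathscr{F}}$, a random permutation must place every non-identity entry of $P$ on a type-compatible safe position (any message position for $X$, $Y$, $Z$; additionally the $\ket{+}\bra{+}$-trap positions for $X$ and the $\ket{0}\bra{0}$-trap positions for $Z$; only message positions for $Y$) while simultaneously placing strictly more than $t$ of them on the $n$ message positions. Since the message register is a $1/3$ fraction of the $3n$ total qubits, each non-identity entry lands on a message position with marginal probability $1/3$, and requiring at least $t+1$ such landings is what yields the $(1/3)^{t+1}$ bound. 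Making this rigorous uniformly over all weights and $X$/$Y$/$Z$ content of $P$ is the hardest step; I would proceed by case analysis on the Pauli type distribution of $P$, combined with monotonicity of hypergeometric tails, exploiting the fact that the trap-compatibility constraint only \emph{reduces} the number of not-detected configurations available to the non-message entries.
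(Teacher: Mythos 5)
Your plan matches the paper's proof essentially step for step: the Pauli twirl over $k_2$ to kill cross terms, the permutation average over $k_1$ turning the effective attack into $\pi^{\dagger}P\pi$ for random $\pi$, the reduction of the trace distance to the weight $\sum_P \abs{\alpha_P}^2\Pr_\pi[\pi^{\dagger}P\pi \in \mathds{P}_{\mathscr{E}}\setminus\mathds{P}_{\mathscr{F}}]$ via the triangle inequality, and the final $(1/3)^{t+1}$ bound coming from forcing at least $t+1$ non-identity entries onto the $n$ message positions. The combinatorial estimate you defer as the hardest step is precisely the paper's \cref{lem:eta} in \cref{app:Eta}, and it is proved there by the same route you sketch: a case analysis on the $X$/$Y$/$Z$ type distribution of $P$ showing the permutation count is maximized when $P$ has weight exactly $t+1$, which yields $\eta_P/(3n)! \leq \prod_{i=0}^{t}\frac{n-t+i}{3n-t+i} \leq (1/3)^{t+1}$.
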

\begin{proof}
Using the simulator described above, we wish to show that:
\begin{equation}
\label{eqn:trace-distance}
D \Big (\frac{1}{\abs{\K}} \sum\limits_{k \in \K} \mathscr{E}_k(\rho_{MR}),\mathscr{F}(\rho_{MR}) \Big) \leq \epsilon, \forall \rho_{MR}.
\end{equation}
Consider a general attack $U_{CR}$, written as  $U_{CR} = \sum\limits_{P \in \mathds{P}_{3n}}\alpha_P P_C \otimes U_R^P$ with $\sum\limits_{P \in \mathds{P}_{3n}} \abs{\alpha_P}^2=1$. Let $\psi = Enc_M (\rho_{MR} ) \otimes \ket{0}\bra{0}^{\otimes n} \otimes \ket{+}\bra{+}^{\otimes n}$. The real-world channel is then represented as:
\begin{align}
\label{eqn:real-w-attack}
&{\mathscr{E}_k}^{\regM\regR \rightarrow \regM\regR\regF}: \rho_{\regM\regR} \mapsto
\mathcal{D}_{k}\Big(\Big(\sum\limits_{P \in \mathds{P}_{3n}}\alpha_P P_C \otimes U_R^P\Big) \mathcal{E}_{k}(\rho_{\regM\regR}) % \notag \\
%& \qquad \qquad \qquad \qquad \qquad \qquad \qquad \qquad
\Big(\sum\limits_{P \in \mathds{P}_{3n}}\overline{\alpha_P} P_C \otimes U_R^{P\dagger}\Big)\Big) \\
& = \frac{1}{\abs{\mathcal{K}}} tr_{0,+} \sum\limits_{k \epsilon \mathcal{K}}\Bigg(Dec_M\Big(\mathcal{P}_{acc}\pi^{\dagger}_{k_1}P_{k_2}\Big(\sum\limits_{P \in \mathds{P}_{3n}}\alpha_P P_C \otimes U_R^P\Big)P_{k_2}\pi_{k_1}\psi  \notag \\
& \qquad \qquad \qquad \quad \qquad \qquad \qquad \quad \qquad \quad
\pi_{k_1}^{\dagger}P_{k_2}\Big(\sum\limits_{P \in \mathds{P}_{3n}}\overline{\alpha_P} P_C \otimes U_R^{P\dagger}\Big)P_{k_2} \pi_{k_1} \mathcal{P}_{acc}^{\dagger}\Big)%  \notag \\
%& \qquad \qquad \qquad \quad \qquad \qquad \qquad \quad  \qquad \qquad \qquad \quad \qquad \qquad \qquad \quad \qquad \qquad \qquad \quad \qquad 
 \otimes \accstate \notag\\
&+ tr_{M}\Big(\mathcal{P}_{rej}\pi^{\dagger}_{k_1}P_{k_2}\Big(\sum\limits_{P \in \mathds{P}_{3n}}\alpha_P P_C \otimes U_R^P\Big)(P_{k_2}\pi_{k_1} \psi \pi_{k_1}^{\dagger}P_{k_2}) \notag %\\% \notag \\
%& \qquad \qquad
%&\qquad
\Big(\sum\limits_{P \in \mathds{P}_{3n}}\overline{\alpha_P} P_C \otimes U_R^{P\dagger}\Big)P_{k_2}\pi_{k_1} \mathcal{P}_{rej}^{\dagger}\Big) \Omega_M \otimes \rejstate \Bigg).
\end{align}
From here we apply the Pauli Twirl (\cref{lem:Pauli-twirl}):
\begin{align}
& = \frac{1}{\abs{\mathcal{K}_1}} tr_{0,+} \sum\limits_{k_1 \epsilon \mathcal{K}_1}\Bigg(Dec_M \Big(\mathcal{P}_{acc}\pi^{\dagger}_{k_1}\Big(\sum\limits_{P \in \mathds{P}_{3n}}\abs{\alpha_P}^2 (P_C \otimes U_R^P)\pi_{k_1} \psi   %\notag \\
%& \qquad \qquad \qquad \qquad \qquad  \qquad  \qquad
\pi_{k_1}^{\dagger}(P_C \otimes U_R^{P\dagger})\Big) \pi_{k_1} \mathcal{P}_{acc}^{\dagger}\Big) \otimes \accstate \notag \\
&\quad + tr_{M}\Big(\mathcal{P}_{rej}\pi^{\dagger}_{k_1}\Big(\sum\limits_{P \in \mathds{P}_{3n}}\abs{\alpha_P}^2 (P_C \otimes U_R^P)\pi_{k_1} \psi  %\notag \\
%& \qquad\qquad\qquad\qquad
\pi_{k_1}^{\dagger}(P_C \otimes U_R^{P\dagger})\Big)\pi_{k_1} \mathcal{P}_{rej}^{\dagger}\Big) \Omega_M \otimes \rejstate \Bigg).
\end{align}
Since the permutations act on the same register as the attack Paulis, we can move the permutations to be considered to be acting on the Paulis instead of the message and traps:
\begin{align}
 &= \frac{1}{\abs{\mathcal{K}_1}}tr_{0,+}\sum\limits_{k_1 \epsilon \mathcal{K}_1}\Bigg( Dec_M \Big(\mathcal{P}_{acc}\Big(\sum\limits_{P \in \mathds{P}_{3n}}\abs{\alpha_P}^2 (\pi_{k_1}^{\dagger}P_C \pi_{k_1}\otimes U_R^P) \psi  %\notag \\
%& \qquad \qquad \qquad \qquad \qquad \qquad \qquad
(\pi_{k_1}^{\dagger}P_C \pi_{k_1} \otimes U_R^{P\dagger})\Big)  \mathcal{P}_{acc}^{\dagger}\Big) \otimes \accstate \notag \\
&\quad +tr_{M}\Big(\mathcal{P}_{rej}\Big(\sum\limits_{P \in \mathds{P}_{3n}}\abs{\alpha_P}^2 (\pi^{\dagger}_{k_1}P_C\pi_{k_1} \otimes U_R^P) \psi  %\notag \\
%& \qquad \qquad \qquad \qquad
 (\pi_{k_1}^{\dagger}P_C \pi_{k_1}  \otimes U_R^{P\dagger})\Big)\mathcal{P}_{rej}^{\dagger}\Big) \Omega_M \otimes \rejstate \Bigg).
\end{align}
Finally we apply the projectors and notice that $\mathcal{K}_1 = \Pi_{3n}$:
\begin{align}
&= \frac{1}{\abs{\Pi_{3n}}}tr_{0,+}\sum\limits_{\pi \epsilon \Pi_{3n}}\Bigg(Dec_M\Big(\sum\limits_{P | \pi^{\dagger}P\pi \in \mathds{P}_{\mathscr{E}}}\abs{\alpha_P}^2 (\pi^{\dagger}P_C \pi \otimes U_R^P)\psi  %\notag \\
%& \qquad \qquad \qquad \qquad \qquad \qquad \qquad \qquad \qquad
 (\pi^{\dagger}P_C \pi \otimes U_R^{P\dagger})\Big) \otimes \accstate  \notag \\
&\quad +  tr_{M} \Big(\sum\limits_{P | \pi^{\dagger} P \pi \in \mathds{P}_{3n}\setminus \mathds{P}_{\mathscr{E}}} \abs{\alpha_P}^2 (\pi^{\dagger}P_C\pi \otimes U_R^P)\psi  %\notag \\
%& \qquad \qquad \qquad \qquad \qquad \qquad
(\pi^{\dagger}P_C \pi  \otimes U_R^{P\dagger})\Big)\Omega_M \otimes \rejstate \Bigg).
\end{align}

Then:
\begin{align}
&\frac{1}{2} \Big \lVert \frac{1}{\abs{\K}}\sum\limits_{k \in \K} \mathscr{E}_k(\rho_{MR}) - \mathscr{F}(\rho_{MR}) \Big \rVert_{1}  \\
&= \frac{1}{2} \Big\lVert \frac{1}{\abs{\Pi_{3n}}}\sum\limits_{\pi \epsilon \Pi_{3n}}\Bigg( tr_{0,+}\Big(Dec_M\Big(\sum\limits_{P | \pi^{\dagger}P\pi \in \mathds{P}_{\mathscr{E}}}\abs{\alpha_P}^2 (\pi^{\dagger}P_C \pi \otimes U_R^P)\psi  %\notag \\
%& \qquad\qquad\qquad\qquad\qquad\qquad\qquad\qquad\qquad
 (\pi^{\dagger}P_C \pi \otimes U_R^{P\dagger})\Big) \otimes \accstate  \notag \\
&\quad +  tr_{M} \Big(\sum\limits_{P | \pi^{\dagger} P \pi \in \mathds{P}_{3n}\setminus \mathds{P}_{\mathscr{E}}} \abs{\alpha_P}^2 (\pi^{\dagger}P_C\pi \otimes U_R^P)\psi  %\notag \\
%& \qquad\qquad\qquad\qquad\qquad\qquad\qquad\qquad
 (\pi^{\dagger}P_C \pi  \otimes U_R^{P\dagger})\Big)\Omega_M \otimes \rejstate \Big) \notag \\
&\quad - tr_{C_1C_2}  \Big( \sum\limits_{P | \pi^{\dagger}P \pi \in \mathds{P}_{\mathscr{F}}} \abs{\alpha_P}^2 (\pi^\dagger_{C_1} P_{C_1} \pi_{C_1} \otimes U_R^{P})(\phi_{MRC_1C_2})  %\notag \\
%& \qquad\qquad\qquad\qquad\qquad\qquad\qquad \qquad
 (\pi_{C_1}^{\dagger} P_{C_1} \pi_{C_1} \otimes U_R^{P \dagger}) \Big) \otimes \accstate \notag \\
&\quad -  tr_{\regM C_1C_2} \Big( \sum\limits_{P | \pi^{\dagger} P \pi \notin \mathds{P}_{\mathscr{F}}} \abs{\alpha_P}^2 (\pi^{\dagger}_{C_1}P_{C_1} \pi_{C_1}\otimes U_R^{P}) (\phi_{MRC_1C_2}) % \notag \\
%& \qquad\qquad\qquad\qquad
 (\pi^\dagger_{C_1} P_{C_1} \pi_{C_1}\otimes U_R^{P \dagger}) \Big)  \Omega_{M}\otimes \rejstate \Bigg) \Big \rVert_{1}. \notag
\end{align}

We will subtract the accepted states in the ideal protocol from those accepted in the real protocol and we will subtract the rejected states in the real protocol from the rejected states in the ideal protocol. Note that $\mathds{P}_{\mathscr{E}}\setminus \mathds{P}_{\mathscr{F}}=\{P \otimes R \otimes Q | P \in \mathds{P}_{n}, \omega(P)>t, R \in \{I,Z\}^{\otimes n}, Q \in \{I,X\}^{\otimes n} \}$.
\begin{align}
&= \frac{1}{2} \Big \lVert \frac{1}{\abs{\Pi_{3n}}}\sum\limits_{\pi \epsilon \Pi_{3n}} \sum\limits_{P | \pi^{\dagger}P\pi \in \mathds{P}_{\mathscr{E}}\setminus \mathds{P}_{\mathscr{F}}} \Bigg( tr_{0,+} \Big( Dec_M  (\abs{\alpha_P}^2 (\pi^{\dagger}P_C \pi \otimes U_R^P)\psi  %\notag \\
%& \qquad \qquad \qquad \qquad \qquad \qquad \qquad \qquad \qquad
(\pi^{\dagger}P_C \pi \otimes U_R^{P\dagger}))\Big)\otimes \accstate \notag \\
&\quad - tr_{\regM C_1C_2} \Big( \abs{\alpha_P}^2 (\pi^{\dagger}_{C_1}P_{C_1} \pi_{C_1}\otimes U_R^{P}) (\phi_{MRC_1C_2})  %\notag \\
%& \qquad \qquad \qquad \qquad
(\pi^\dagger_{C_1} P_{C_1} \pi_{C_1}\otimes U_R^{P \dagger}) \Big)  \Omega_{M}\otimes \rejstate \Bigg) \Big \rVert_{1}.
\end{align}
Here we will use the triangle inequality to remove the sums from the trace distance:
\begin{align}
&\leq \frac{1}{2} \frac{1}{\abs{\Pi_{3n}}}\sum\limits_{\pi \in \Pi_{3n}} \sum\limits_{P | \pi^{\dagger}P\pi \in \mathds{P}_{\mathscr{E}}\setminus \mathds{P}_{\mathscr{F}}} \Big \lVert tr_{0,+} \Big( Dec_M (\abs{\alpha_P}^2 (\pi^{\dagger}P_C \pi \otimes U_R^P)\psi  %\notag \\
%& \qquad \qquad \qquad \qquad \qquad \qquad \qquad \qquad \quad
(\pi^{\dagger}P_C \pi \otimes U_R^{P\dagger}))\Big) \otimes \accstate  \notag \\
& \quad -  tr_{\regM C_1C_2} \Big(\abs{\alpha_P}^2 (\pi^{\dagger}_{C_1}P_{C_1} \pi_{C_1}\otimes U_R^{P}) (\phi_{MRC_1C_2}) % \notag \\
%& \qquad \qquad \qquad \qquad \qquad
 (\pi^\dagger_{C_1} P_{C_1} \pi_{C_1}\otimes U_R^{P \dagger}) \Big)  \Omega_{M}\otimes \rejstate \Big \rVert_{1}.
\end{align}
Since the maximum trace distance between two states is $1$ we have:
\begin{align}
& \leq \frac{1}{\abs{\Pi_{3n}}}\sum\limits_{k_1 \epsilon \mathcal{K}_1} \sum\limits_{P | \pi^{\dagger}P\pi \in \mathds{P}_{\mathscr{E}}\setminus \mathds{P}_{\mathscr{F}}} \abs{\alpha_P}^2.
\end{align}
Now if we let $\eta_P$ be the number of permutations, $\pi$ of $P$ such that $\pi^{\dagger}P\pi \in \mathds{P}_{\mathscr{E}}\setminus \mathds{P}_{\mathscr{F}}$, then the above can be written as:
\begin{align}
& = \frac{1}{\abs{\Pi_{3n}}} \sum\limits_{P \in \mathds{P}_{3n}} \eta_P \times \abs{\alpha_P}^2.
\end{align}

In \cref{app:Eta}, we give a combinatorial Lemma (\cref{lem:eta}), which gives us $\eta_P \leq {n \choose t+1}(t+1)! (3n-(t+1))!$. Thus, since $\sum\limits_{P \in \mathds{P}_{3n}}\abs{\alpha_P}^2=1$, the above expression can be bounded by:
\begin{align}
& \leq  \frac{1}{(3n)!} \times {n \choose t+1}(t+1)! (3n-(t+1))! \notag \\
& = \frac{\prod\limits_{i=1}^{n}i \prod\limits_{i=1}^{3n-t-1}i}{\prod\limits_{i=1}^{n-t-1}i  \prod\limits_{i=1}^{3n}i} =  \frac{\prod\limits_{i=n-t}^{n}i }{\prod\limits_{i=3n-t}^{3n}i} =  \prod\limits_{i=0}^{t} \frac{ n-t+i}{3n-t+i} \notag \\
& \leq  \prod\limits_{i=0}^{t} \frac{1}{3}  = \Big(\frac{1}{3}\Big)^{t+1}
\end{align}

Therefore, $D \Big ( \frac{1}{\abs{\K}} \sum\limits_{k \in \K} \mathscr{E}_k(\rho_{MR}), \mathscr{F}(\rho_{MR}) \Big ) \leq (\frac{1}{3})^{t+1} , \forall \rho_{MR}$.
\end{proof}

We note that this is very similar to the bound in \cite{BGS13} of $(\frac{2}{3})^{d/2}$: note that the trap code in \cite{BGS13} uses the error \emph{detection} property of the code. Since a code of distance~$d$ can detect up to~$d/2$ errors, this bound is consistent with our bound of  $(\frac{1}{3})^{t+1}$.

\subsubsection*{Acknowledgements}
We would like to thank Florian Speelman for feedback on a prior version of this work, as well as the anonymous reviewers for useful corrections.

\ifcameraready
  \bibliographystyle{arxiv_no_month}
\else
     \bibliographystyle{alphaarxiv}
\fi
\newcommand{\etalchar}[1]{$^{#1}$}

%\bibliography{./full,./quantum}

\begin{thebibliography}{BEM{\etalchar{+}}07}

\bibitem[ABE10]{ABE10}
D.~Aharonov, M.~{Ben-Or}, and E.~Eban.
\newblock Interactive proofs for quantum computations.
\newblock In {\em Innovations in Computer Science---ICS 2010}, pages 453--469,
  2010.
\newblock \\ \texttt{arXiv:\,\href{http://arxiv.org/abs/0810.5375}{0810.5375}}.

\bibitem[BB84]{BB84}
C.~H. Bennett and G.~Brassard.
\newblock Quantum cryptography: Public key distribution and coin tossing.
\newblock In {\em International Conference on Computers, Systems and Signal
  Processing}, pages 175--179, 1984.

\bibitem[BCG{\etalchar{+}}02]{BCG+02}
H.~Barnum, C.~Cr{\'e}peau, D.~Gottesman, A.~Smith, and A.~Tapp.
\newblock Authentication of quantum messages.
\newblock In {\em 43rd Annual Symposium on Foundations of Computer
  Science---FOCS 2002}, pages 449--458, 2002.
\newblock \\
  \texttt{DOI:\,\href{http://dx.doi.org/10.1109/SFCS.2002.1181969}{10.1109/SFCS.2002.1181969}}.

\bibitem[BCG{\etalchar{+}}06]{BCG+06}
M.~{Ben-Or}, C.~Cr\'{e}peau, D.~Gottesman, A.~Hassidim, and A.~Smith.
\newblock Secure multiparty quantum computation with (only) a strict honest
  majority.
\newblock In {\em 47th Annual Symposium on Foundations of Computer
  Science---FOCS 2006}, pages 249--260, 2006.
\newblock \\
  \texttt{DOI:\,\href{http://dx.doi.org/10.1109/FOCS.2006.68}{10.1109/FOCS.2006.68}}.

\bibitem[BEM{\etalchar{+}}07]{BEM+07}
D.~Bru{\ss}, G.~Erd\'{e}lyi, T.~Meyer, T.~Riege, and J.~Rothe.
\newblock Quantum cryptography: {A} survey.
\newblock {\em ACM Computing Surveys---CSUR}, 39(2), 2007.
\newblock \\
  \texttt{DOI:\,\href{http://dx.doi.org/10.1145/1242471.1242474}{10.1145/1242471.1242474}}.

\bibitem[BGS13]{BGS13}
A.~Broadbent, G.~Gutoski, and D.~Stebila.
\newblock Quantum one-time programs.
\newblock In {\em Advances in Cryptology---CRYPTO 2013}, pages 344--360, 2013.
\newblock \\
  \texttt{DOI:\,\href{http://dx.doi.org/10.1007/978-3-642-40084-1\_20}{10.1007/978-3-642-40084-1\_20}}.

\bibitem[BS16]{BS16}
A.~Broadbent and C.~Schaffner.
\newblock Quantum cryptography beyond quantum key distribution.
\newblock {\em Designs, Codes and Cryptography}, 78: 351--382, 2016.
\newblock \\
  \texttt{DOI:\,\href{http://dx.doi.org/10.1007/s10623-015-0157-4}{10.1007/s10623-015-0157-4}}.

\bibitem[Can01]{Can01}
R.~Canetti.
\newblock Universally composable security: {A} new paradigm for cryptographic
  protocols.
\newblock In {\em 42nd Annual Symposium on Foundations of Computer
  Science---FOCS 2001}, pages 136--145, 2001.
\newblock \\
  \texttt{DOI:\,\href{http://dx.doi.org/10.1109/SFCS.2001.959888}{10.1109/SFCS.2001.959888}}.

\bibitem[CGL99]{CGL99}
R.~Cleve, D.~Gottesman, and H.-K. Lo.
\newblock How to share a quantum secret.
\newblock {\em Physical Review Letters}, 83(3): 648--651, 1999.
\newblock \\
  \texttt{DOI:\,\href{http://dx.doi.org/10.1103/PhysRevLett.83.648}{10.1103/PhysRevLett.83.648}}.

\bibitem[DCEL09]{DCEL09}
C.~Dankert, R.~Cleve, J.~Emerson, and E.~Livine.
\newblock Exact and approximate unitary 2-designs and their application to
  fidelity estimation.
\newblock {\em Physical Review A}, 80: 012304, 2009.
\newblock \\
  \texttt{DOI:\,\href{http://dx.doi.org/10.1103/PhysRevA.80.012304}{10.1103/PhysRevA.80.012304}}.

\bibitem[DNS12]{DNS12}
F.~Dupuis, J.~B. Nielsen, and L.~Salvail.
\newblock Actively secure two-party evaluation of any quantum operation.
\newblock In {\em Advances in Cryptology---CRYPTO 2012}, pages 794--811, 2012.
\newblock \\
  \texttt{DOI:\,\href{http://dx.doi.org/10.1007/978-3-642-32009-5\_46}{10.1007/978-3-642-32009-5\_46}}.

\bibitem[Feh10]{Feh10}
S.~Fehr.
\newblock Quantum cryptography.
\newblock {\em Foundations of Physics}, 40(5): 494--531, 2010.
\newblock \\
  \texttt{DOI:\,\href{http://dx.doi.org/10.1007/s10701-010-9408-4}{10.1007/s10701-010-9408-4}}.

\bibitem[Got97]{Got97}
D.~Gottesman.
\newblock {\em Stabilizer Codes and Quantum Error Correction}.
\newblock PhD thesis, California Institute of Technology, 1997.
\newblock \\ Online: \url{http://arxiv.org/abs/quant-ph/9705052}.

\bibitem[HLM11]{HLM11}
P.~Hayden, D.~Leung, and D.~Mayers.
\newblock Universal composable security of quantum message authentication with
  key recycling.
\newblock Presented at QCRYPT 2011, 2011.

\bibitem[NC00]{NC00}
M.~A. Nielsen and I.~L. Chuang.
\newblock {\em Quantum Computation and Quantum Information}.
\newblock Cambridge University Press, 2000.

\bibitem[Unr10]{Unr10}
D.~Unruh.
\newblock Universally composable quantum multi-party computation.
\newblock In {\em Advances in Cryptology---EUROCRYPT 2010}, pages 486--505,
  2010.
\newblock \\
  \texttt{DOI:\,\href{http://dx.doi.org/10.1007/978-3-642-13190-5\_25}{10.1007/978-3-642-13190-5\_25}}.

\bibitem[Wat11]{Wat11}
J.~Watrous.
\newblock Guest column: An introduction to quantum information and quantum
  circuits.
\newblock {\em ACM SIGACT News}, 42(2): 52--67, 2011.
\newblock \\
  \texttt{DOI:\,\href{http://dx.doi.org/10.1145/1998037.1998053}{10.1145/1998037.1998053}}.

\end{thebibliography}
\appendix

\section{Proof of \cref{lem:Cliff-twirl}} \label{sec:app-Clifford}
For completeness, we provide a proof of the Clifford Twirl \cref{lem:Cliff-twirl} (see also~\cite{DCEL09}).

%Clifford Twirl Proof
\begin{proof}
We will follow the proof structure of \cite{DCEL09}, but simplify for our purposes. Since $\mathds{P}_n$ is a subgroup of $\mathcal{C}_n$, then we know that the number of left cosets of $\mathds{P}_n$ in $\mathcal{C}_n$, or the index of $\mathds{P}_n$ in $\mathcal{C}_n$, is given by:
\begin{equation}
[\mathcal{C}_n  : \mathds{P}_n]=\frac{\abs{\mathcal{C}_n}}{\abs{\mathds{P}_n}}
\end{equation}

Then given a representative from each of the cosets, $\{C_1, C_2,\ldots, C_{\frac{\abs{\mathcal{C}_n}}{\abs{\mathds{P}_n}}}\}$, we can rewrite the sum over all Cliffords as a double sum of the Paulis and the coset representatives as below. We note that it does not matter which representative we choose, only that we have one from each of the cosets, and that the indices on the $C$ terms give which coset they came from:

\begin{align}
\sum\limits_{C \in \mathcal{C}_n} C^{\dagger}PC \rho C^{\dagger}P'C &= \sum\limits_{i=1}^{\frac{\abs{\mathcal{C}_n}}{\abs{\mathds{P}_n}}} \sum\limits_{R \in \mathds{P}_n} (C_iR)^{\dagger}PC_iR \rho (C_iR)^{\dagger}P'C_iR \notag \\
& =   \sum\limits_{i=1}^{\frac{\abs{\mathcal{C}_n}}{\abs{\mathds{P}_n}}}  \sum\limits_{R \in \mathds{P}_n} R^{\dagger}C_i^{\dagger} PC_iR \rho R^{\dagger}C_i^{\dagger}P'C_iR
\end{align}
Now since $C_i^{\dagger}PC_i=Q_i$ for some $Q_i \in \mathds{P}_n$, and since if $P \neq P'$ then $Q_i \neq Q'_i$, we can simplify our expression to one that only involves Paulis:
\begin{align}
& =  \sum\limits_{i=1}^{\frac{\abs{\mathcal{C}_n}}{\abs{\mathds{P}_n}}} \sum\limits_{R \in \mathds{P}_n}  R^{\dagger}Q_iR \rho R^{\dagger}Q_i'R \notag \\
&= \sum\limits_{i=1}^{\frac{\abs{\mathcal{C}_n}}{\abs{\mathds{P}_n}}} 0  \notag \\
&=0
\end{align}
by the Pauli Twirl (\cref{lem:Pauli-twirl}).
\end{proof}

\section{Combinatorial Lemma}\label{app:Eta}

\begin{lemma} \label{lem:eta}
For a fixed $P \in \mathds{P}_{3n}$, let $\eta_P$ denote the number of permutations $\pi$ of~$P$ such that $\pi^{\dagger}P\pi \in \mathds{P}_{\mathscr{E}} \setminus \mathds{P}_{\mathscr{F}}$ Then for all $P$:
\begin{equation}
\eta_P \leq {n \choose t+1}(t+1)!(3n-(t+1))!\,.
\end{equation}
\end{lemma}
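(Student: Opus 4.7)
The plan is to interpret $\eta_P$ as the number of permutations of $\{1,\dots,3n\}$ satisfying explicit combinatorial constraints, then bound this count via an injection into a set of size $\binom{n}{t+1}(t+1)!(3n-t-1)!$. First I will unpack the condition $\pi^{\dagger}P\pi\in\mathds{P}_{\mathscr{E}}\setminus\mathds{P}_{\mathscr{F}}$: writing $\sigma\in\Pi_{3n}$ for the permutation with $(\pi^{\dagger}P\pi)_j=P_{\sigma(j)}$, and labelling the three blocks as message $M=\{1,\dots,n\}$, $\ket{0}$-trap $T_0=\{n+1,\dots,2n\}$, and $\ket{+}$-trap $T_+=\{2n+1,\dots,3n\}$, membership in $\mathds{P}_{\mathscr{E}}$ forces $\sigma(T_0)$ to avoid the $X$- and $Y$-positions of $P$ and $\sigma(T_+)$ to avoid the $Z$- and $Y$-positions; in particular every $Y$-position of $P$ must lie in $\sigma(M)$. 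The exclusion from $\mathds{P}_{\mathscr{F}}$ adds the weight requirement $|\sigma(M)\cap\mathrm{NI}(P)|\geq t+1$, where $\mathrm{NI}(P)$ is the set of non-identity single-qubit positions of~$P$.

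Second, I canonically associate to each valid $\sigma$ an ordered $(t+1)$-tuple witnessing the weight condition: let $i_1<i_2<\cdots<i_{t+1}$ be the $t+1$ smallest elements of $\mathrm{NI}(P)\cap\sigma(M)$, and set $k_\ell:=\sigma^{-1}(i_\ell)\in M$. The tuple $(k_1,\dots,k_{t+1})$ is an ordered list of $t+1$ distinct elements of $M$, of which there are exactly $\binom{n}{t+1}(t+1)!$; this supplies the first factor of the bound.

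Third, I will argue that $\sigma$ is recoverable from the pair $\bigl((k_1,\dots,k_{t+1}),\,\sigma|_{\{1,\dots,3n\}\setminus\{k_1,\dots,k_{t+1}\}}\bigr)$. The values absent from the image of the restriction must be precisely $\{i_1,\dots,i_{t+1}\}$, and the canonical increasing ordering forces $\sigma(k_\ell)=i_\ell$ uniquely. After identifying the restriction, via the order-preserving relabellings on both its $(3n-t-1)$-element domain and codomain, with a permutation of $\{1,\dots,3n-t-1\}$, we pick up the factor $(3n-t-1)!$, and the product yields the claimed bound.

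The main obstacle I anticipate is ensuring that this injection lands in a codomain of the asserted size rather than a naively larger one: treating the restriction as an arbitrary injection of a $(3n-t-1)$-set into $\{1,\dots,3n\}$ would give $(3n)!/(t+1)!$ options and inflate the bound by a factor on the order of $\binom{w}{t+1}$, where $w$ is the Pauli weight of~$P$. The rigidity imposed by the $\mathscr{E}$-constraints (all $Y$'s into $M$; each $X$ into $M\cup T_+$; each $Z$ into $M\cup T_0$) is precisely what pins down the missing-value set and collapses this excess. As a sanity check, the extremal instance $P=Y^{t+1}\otimes I^{3n-t-1}$ saturates the bound: the $Y$-constraint forces all $t+1$ non-identity positions into $M$, and a direct multinomial count gives $\eta_P=\binom{3n-t-1}{n-t-1,\,n,\,n}(n!)^3=\binom{n}{t+1}(t+1)!(3n-t-1)!$, so the injection loses nothing in this worst case.
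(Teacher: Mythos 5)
Your injection is not injective, and the failure is precisely at the point you identify as the main obstacle: the $\mathscr{E}$-constraints do \emph{not} pin down the missing-value set. From the stored pair you know $(k_1,\dots,k_{t+1})$ and the relabelled restriction, but to undo the order-preserving relabelling of the codomain you must already know which $t+1$ values $\{i_1,\dots,i_{t+1}\}$ are absent from the image of the restriction --- and that set varies over the fibre whenever $P$ has non-identity entries that may legally land outside $\sigma(M)$. Concrete collision with $n=3$, $t=0$, $P=X\otimes X\otimes I^{\otimes 7}$ (so $M=\{1,2,3\}$, $T_0=\{4,5,6\}$, $T_+=\{7,8,9\}$): let $\sigma$ send $1\mapsto1$, $7\mapsto2$, $2\mapsto3$, $3\mapsto4$, $4\mapsto5$, $5\mapsto6$, $6\mapsto7$, $8\mapsto8$, $9\mapsto9$, and let $\sigma'$ agree with $\sigma$ except $1\mapsto2$, $7\mapsto1$. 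Both are valid (each puts one $X$-position in $M$, the other in $T_+$, none in $T_0$); both give $i_1\in\{1,2\}$ with $k_1=\sigma^{-1}(i_1)=1$; and the two restrictions to $\{2,\dots,9\}$, with codomains $\{2,\dots,9\}$ and $\{1,3,4,\dots,9\}$ respectively, relabel to the \emph{same} permutation of $\{1,\dots,8\}$. Two distinct valid permutations collide, so the argument as written does not establish the bound. Patching it by bounding the fibre size (at most the number of candidate sets $\{i_1,\dots,i_{t+1}\}\subseteq\mathrm{NI}(P)$) only yields $\eta_P\le\binom{n}{t+1}(t+1)!\,(3n-t-1)!\cdot\binom{\omega(P)}{t+1}$, which is weaker than the claim. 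Note that your sanity check exercises exactly the one situation ($P$ consisting of $t+1$ $Y$'s) in which the missing-value set is forced, so it cannot detect this.

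The statement itself is true, but what is needed is a quantitative fact your bookkeeping does not supply: every non-identity entry of $P$ beyond the mandatory $t+1$ in the message block must \emph{shrink} the count of valid permutations by at least as much as it inflates the fibre of your map. The paper proves this directly rather than by injection: it parametrizes the valid permutations of a fixed $P$ by how many $X$'s, $Y$'s and $Z$'s land in each of the three blocks, writes the count as an explicit product of binomials and factorials, and observes that each additional non-identity Pauli contributes a multiplicative factor strictly less than $1$ (it is confined to $M$, or to the single trap block that tolerates it, rather than ranging over all remaining slots), so the product is maximized at weight exactly $t+1$ with all non-identity entries in $M$. To repair your argument you would have to store $\{i_1,\dots,i_{t+1}\}$ itself and then separately prove this compensating shrinkage of the remaining count --- at which point you have essentially reproduced the paper's maximization.
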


An intuitive argument for the above lemma is that $\eta_P$ can be upper-bounded by fixing a Pauli~$P \in \{I,X\}^{3n}$ of weight $t+1$. We show that a Pauli with greater weight will have $\leq \eta_P$ possible allowed permutations. To find the number of possible allowed permutations, we will consider the first $n$ positions, where we require at least $t+1$ non-identity Paulis (for a total of $\binom{n}{t+1} (t+1)!$ permutations). The remaining positions are then simply permuted, since we have used all of the non-identity Paulis already, contributing a multiplicative factor of  $(3n-(t+1))!$ permutations. This is formalized below (where we also consider general attack Paulis consisting of combinations of $X$, $Y$ and $Z$).

\begin{proof}

In order to find an upper bound for $\eta_P$, we look to find the Pauli, $P$, that has the largest number of permutations, $\pi$, such that $\pi^{\dagger}P \pi \in \mathds{P}_{\mathscr{E}}\setminus \mathds{P}_{\mathscr{F}}$.

For a Pauli~$P$ with $\omega(P)=d$, we write $d=d_x+d_y+d_z+x_1+y+z_1+x_2+z_2$ for values $d_x, d_y, d_z, x_1, y, z_1, x_2, z_2$ as follows:
\begin{enumerate}
\item $d_x, d_y, d_z$ where $d_x + d_y + d_z=t+1$. These are the $t+1$ $X$, $Y$, and $Z$ Paulis that must be applied to the first $n$ qubits for the Pauli to be in $\mathds{P}_{\mathscr{E}} \setminus \mathds{P}_{\mathscr{F}}$.
\item $y$ where $y+d_y$ is the total number of $Y$ Paulis in $P$ and $y$ are the additional $Y$ Paulis applied to the first $n$ qubits. Note that $Y$ Paulis cannot be applied to either set of traps without altering them.
\item $x_1, x_2$ where $x_1+x_2 + d_x$ is the total number of $X$ Paulis in $P$ and $x_1$ are the additional $X$ Paulis applied to the first $n$ qubits and $x_2$ are the $X$ Paulis applied to the $\ket{+}\bra{+}^{\otimes n}$ traps.
\item $z_1, z_2$ where $z_1+z_2 + d_z$ is the total number of $Z$ Paulis in $P$ and $z_1$ are the additional $Z$ Paulis applied to the first $n$ qubits and $z_2$ are the $Z$ Paulis applied to the $\ket{0}\bra{0}^{\otimes n}$ traps.
\end{enumerate}

Then  the possible permutations on $P$ are found by multiplying the following terms:
\begin{enumerate}
\item $\binom{n}{d_x,d_y,d_z,n-t-1}d_x!d_y!d_z!$ Which is the number of ways to choose the required $t+1$ spots for the minimum number of Paulis applied to the first $n$ qubits, multiplied by the number of ways of permuting each of the sets of $X$, $Y$, and $Z$ Paulis. Note that this term simplifies to $\frac{n!}{(n-t-1)!}$,
\item $\binom{n-t-1}{x_1} x_1!$, the number of ways to apply $x_1$ additional $X$ Paulis to the first $n$ qubits,
\item $\binom{n-t-1-x_1}{y}y!$, the number of ways to apply $y$ additional $Y$ Paulis to the first $n$ qubits,
\item $\binom{n-t-1-x_1-y}{z_1} z_1!$, the number of ways to apply $z_1$ additional $Z$ Paulis to the first $n$ qubits,
\item $\binom{n}{x_2}x_2!$, the number of ways to apply $x_2$ $X$ Paulis to the $n$ traps that will not be changed by them,
\item $\binom{n}{z_2}z_2!$, the number of ways to apply $z_2$ $Z$ Paulis to the $n$ traps that will not be changed by them, and
\item $(3n-(d_x+d_y+d_z+x_1+y+z_1+x_2+z_2))!$ the number of ways to permute the remaining identity qubits, which simplifies to $(3n-d)!$.
\end{enumerate}

The product, once simplified, is then:

\begin{align}
\eta_P &= \frac{n!n!n!(3n-d)!}{(n-t-1-x_1-y-z_1)!(n-x_2)!(n-z_2)!} \notag \\
&= \prod\limits_{n-t-x_1-y-z_1}^{n}i\prod\limits_{n-x_2+1}^{n}i\prod\limits_{n-z_2+1}^{n}i\prod\limits_{i=1}^{3n-t-1-x_1-y-z_1-x_2-z_2}i
\end{align}

Since $t$ is fixed, in order to maximize the above expression, we need to minimize $x_1, y, z_1, x_2, z_2$. This is achieved by setting $x_1=y=z_1=x_2=z_2=0$, and therefore $d=t+1$: we thus find that $\eta_P \leq \prod\limits_{n-t}^{n}i\prod\limits_{i=1}^{3n-t-1}i={n \choose t+1}(t+1)!(3n-(t+1))!$.
\end{proof}

\end{document}